\newtheorem{theorem}{Theorem}
\newtheorem{observation}{Observation}
\newtheorem{lemma}{Lemma}
\newtheorem{fact}{Fact}
\newtheorem{cl}{Claim}
\newtheorem*{claim*}{Claim}
\newtheorem{corollary}{Corollary}
\newcommand{\initOneLiners}{%
    \setlength{\itemsep}{0pt}
    \setlength{\parsep }{0pt}
    \setlength{\topsep }{0pt}
}
\newenvironment{proofof}[1]{

\bigskip\noindent{\bf Proof of {#1}:}}
{\hfill$\square$
}
\def\oG{\overline{G}}
\newcommand{\sse}{\subseteq}
\DeclareMathOperator*{\E}{\mathbb{E}}
\def\script#1{\mathcal{#1}}
\def\sM{\ensuremath{\script{M}}\xspace}
\def\sQ{\ensuremath{\script{Q}}\xspace}
\def\sI{\ensuremath{\script{I}}\xspace}
\def\sC{\ensuremath{\script{C}}\xspace}
\def\sV{\ensuremath{\script{V}}\xspace}
\def\sE{\ensuremath{\script{E}}\xspace}
\def\dbm{\textsc{DegMat}\xspace}
\def\rsp{\ensuremath{\mathsf{RSP}}\xspace}
\def\p{{\cal P}}
\def\dim{\mathrm{dim}}
\def\R{\mathbb{R}}
\title{Approximation-Friendly Discrepancy Rounding}
\author{Nikhil Bansal\thanks{Department of Mathematics and Computer Science, Eindhoven University of Technology. } \and Viswanath Nagarajan\thanks{Department of Industrial and Operations Engineering, University of Michigan.} }
\begin{document}
\maketitle

\begin{abstract}
Rounding linear programs using techniques from discrepancy is a recent approach that has been very successful in certain settings. However this method also has some limitations when compared to approaches such as randomized and iterative rounding.  
We provide an extension of the discrepancy-based rounding algorithm due to Lovett-Meka that (i) combines the advantages of both randomized and iterated rounding, (ii) makes it applicable to settings with more general combinatorial structure such as matroids. As applications of this approach, we obtain new results for various classical problems such as 
linear system rounding, degree-bounded matroid basis and low congestion routing.
\end{abstract}

\section{Introduction}
A very common approach for solving discrete optimization problems is to solve some linear programming relaxation, and then  round the fractional solution into an integral one, without (hopefully) incurring much loss in quality.
Over the years several ingenious rounding techniques have been developed (see e.g.~\cite{V01,SW11})
based on ideas from optimization, probability, geometry, algebra and various other areas. Randomized rounding and iterative rounding are two of the most commonly used methods.

Recently, discrepancy-based rounding approaches have also been very successful; a particularly notable result is for bin packing due to  Rothvoss~\cite{R13}. Discrepancy is a well-studied area in combinatorics with several surprising results (see e.g.~\cite{Matousek2010geometric}), and as observed by Lov\'{a}sz et al.~\cite{LSV86}, has a natural  
connection to rounding. However, until the recent algorithmic developments \cite{B10,LM12,HSS14,NT15,R-convex14}, most of the results in discrepancy were non-constructive and hence not directly useful for rounding.
These algorithmic approaches combine probabilistic approaches like randomized rounding with linear algebraic approaches such as iterated rounding \cite{Lau-book}, which makes them quite powerful. 

Interestingly, given the connection between discrepancy and rounding, these discrepancy algorithms 
can in fact be viewed as meta-algorithms for rounding. We discuss this in \S\ref{sec:prelim} in the context of the Lovett-Meka (LM) algorithm \cite{LM12}.
This suggests the possibility of one single approach that generalizes both randomized and iterated rounding.
This is our motivating goal in this paper.

While the LM algorithm is already an important step in this direction, it still has some important limitations.
For example, it is designed for obtaining additive error bounds and  
it does not give good multiplicative error bounds (like those given by randomized rounding). This is not an issue for discrepancy applications, but crucial for many approximation algorithms.
Similarly, iterated rounding can work well with exponentially sized LPs by exploiting their underlying  combinatorial structure (e.g., degree-bounded spanning tree~\cite{SL07}), but the current discrepancy results \cite{LM12,R-convex14} give extremely weak bounds in such settings.

\paragraph{Our Results:}
We extend the LM algorithm to overcome the limitations stated above. 
In particular, we give a new variant that also gives Chernoff type multiplicative error bounds 
(sometimes with an additional logarithmic factor loss).
We also show how to adapt the above algorithm to handle exponentially large LPs  involving matroid constraints,  as in iterated rounding.

This new discrepancy-based algorithm gives new results for problems such as linear system rounding with violations~\cite{BF81,LLRS01}, degree-bounded matroid basis~\cite{KLS08,CVZ10}, low congestion routing~\cite{KLRTVV87,LLRS01} and multi-budgeted matroid basis~\cite{GRSZ14}, 
These results simultaneously combine non-trivial guarantees from discrepancy, randomized rounding and iterated rounding and previously such bounds were not even known existentially.

Our results are described formally in \S\ref{sec:result}. To place them in the proper context, we first need to describe some existing rounding approaches (\S\ref{sec:prelim}). The reader familiar with the LM algorithm can directly go to \S\ref{sec:result}.

\subsection{Preliminaries}\label{sec:prelim}
We begin by describing LM rounding \cite{LM12}, randomized rounding and iterated rounding in a similar form, and then discuss their strengths and weaknesses.

 \paragraph{LM Rounding:} Let $A$ be a $m\times n$ matrix with $0-1$ entries\footnote{The results below generalize to arbitrary real matrices $A$ and vectors $x$ in natural ways, but we consider $0$-$1$ case for simplicity.}, $x \in [0,1]^n$ a fractional vector and let $b=Ax$. Lovett and Meka~\cite{LM12} showed the following rounding result.

\begin{theorem} [LM Rounding~\cite{LM12}]
\label{th:lm}
Given $A$ and $x$ as above,
For $j=1,\ldots,m$, pick any $\lambda_j$ satisfying
\begin{equation}
\label{eqcond21}
 \sum_j \exp(-\lambda_j^2/4) \leq n/16.
 \end{equation}
Then there is an efficient randomized algorithm to find a solution $x'$ such that: (i) at most $n/2$ variables of $x'$ are fractional (strictly between  $0$ and $1$) and, 
(ii) $|\langle a_j , x'- x\rangle| \leq \lambda_j \|a_j\|_2 $ for each $j=1,\ldots,m$, where  $a_j$ denotes the $j$-th row of $A$.
\end{theorem}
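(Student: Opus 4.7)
My plan is to prove this by a discrete Gaussian random walk in the style of Lovett--Meka. Starting from $x^{(0)} = x$, I would iteratively set $x^{(t+1)} = x^{(t)} + \gamma\, g^{(t)}$, where $\gamma > 0$ is a tiny step size and $g^{(t)}$ is a standard Gaussian drawn from a time-varying subspace $V_t \subseteq \mathbb{R}^n$. The subspace $V_t$ is defined to be orthogonal to two sets of directions: (i) the coordinate vectors $e_i$ for indices $i$ that have already become \emph{frozen} (that is, $x^{(t)}_i$ has reached $0$ or $1$, up to the resolution $\gamma$); and (ii) the constraint rows $a_j$ for $j$ that have become \emph{dangerous}, meaning $|\langle a_j, x^{(t)} - x\rangle|$ has grown to roughly $\lambda_j \|a_j\|_2$. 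By construction, frozen coordinates stay frozen and dangerous constraints cannot drift any further. I would take the number of steps $T$ to satisfy $T\gamma^2 \approx 1$ so that, as $\gamma \to 0$, the walk is Brownian motion run for unit time inside the cube $[0,1]^n$.

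The analysis then splits into two pieces. For each constraint $j$, the process $M^{(j)}_t := \langle a_j, x^{(t)} - x\rangle$ is a mean-zero martingale whose quadratic variation grows at rate at most $\|a_j\|_2^2$ per unit time, because $g^{(t)}$ is an isotropic Gaussian (projected onto $V_t$) and hence contributes variance $\le \|a_j\|_2^2\gamma^2$ per step to $M^{(j)}$. A standard reflection/Gaussian tail estimate then gives that the probability constraint $j$ ever becomes dangerous during the entire unit-time walk is at most a constant times $\exp(-\lambda_j^2/2)$. Combined with the hypothesis $\sum_j \exp(-\lambda_j^2/4) \le n/16$ in \eqref{eqcond21}, Markov's inequality yields that with constant probability the number of dangerous constraints is less than, say, $n/8$ throughout. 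For the coordinates, while a coordinate $i$ is unfrozen, $x^{(t)}_i$ is a one-dimensional Brownian motion with rate~$1$ on $[0,1]$, so by a standard hitting-time bound each coordinate freezes within unit time with probability at least $3/4$. Linearity of expectation plus a second moment / concentration argument then gives that at least $n/2$ coordinates are frozen at time $T$.

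Putting the two pieces together yields the theorem: frozen coordinates are in $\{0,1\}$, so at least $n/2$ components of $x'$ are integral, proving~(i); and a constraint that never became dangerous has $|\langle a_j, x' - x\rangle| \le \lambda_j\|a_j\|_2$ by definition, while a dangerous constraint has the same bound because once it goes dangerous we project away its direction, proving~(ii).

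The hard part, in my view, is the joint analysis keeping $V_t$ of large enough dimension so that $g^{(t)}$ has essentially unit variance along each still-mobile coordinate direction. At any time $t$, we need $\dim(V_t) = n - (\text{frozen}) - (\text{dangerous})$ to be comparable to $n - (\text{frozen})$, so that the projection of the Gaussian onto $V_t$ still gives roughly unit drift to each unfrozen coordinate. This in turn requires bounding the dangerous count from above by, say, $n/8$ \emph{at every time} (not just in expectation at time $T$), coupling the two analyses. The clean way is to run the walk only until the first time either (a)~$n/2$ coordinates have frozen (success), or (b)~$n/8$ constraints have gone dangerous (failure), and to use the budget in \eqref{eqcond21} together with a union/martingale argument to show event~(a) occurs first with positive probability; a constant number of independent restarts then boosts this to an efficient randomized algorithm.
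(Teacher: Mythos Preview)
Your overall architecture is the Lovett--Meka one, and the constraint half is fine: the paper itself only \emph{cites} Theorem~\ref{th:lm}, but its proof of the extension in Section~\ref{sec:main-rounding} uses exactly your martingale bound on $\langle a_j, x^{(t)}-x\rangle$ (via Freedman's inequality, Claim~\ref{cl:constr-violn-prob}) to conclude that the expected number of dangerous constraints is at most $n/16$. The gap is in your coordinate analysis. The assertion ``while a coordinate $i$ is unfrozen, $x^{(t)}_i$ is a one-dimensional Brownian motion with rate~$1$'' is false: $g^{(t)}_i$ has variance $\|P_{V_t} e_i\|^2$, and this can be $0$ even when $\dim(V_t)$ is almost $n$ (e.g.\ if a single dangerous $a_j$ happens to equal $e_i$). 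Your proposed fix---keeping $\dim(V_t)$ comparable to $n-(\text{frozen})$---guarantees only that the \emph{average} of $\|P_{V_t} e_i\|^2$ over unfrozen $i$ is close to $1$, not each term individually; so no per-coordinate hitting-time bound is available, and the ``second moment / concentration'' step on the number of frozen coordinates has nothing to stand on (the freezing events are also heavily correlated, so concentration would be delicate even if the marginals were right).

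The actual Lovett--Meka argument, which is what the paper uses in Lemma~\ref{lem:var-tight}, replaces the per-coordinate reasoning by an aggregate potential. With $\Phi(x)=\sum_i x_i^2$ one has
\[
\E\bigl[\Phi(x^{(t+1)})-\Phi(x^{(t)})\,\big|\,x^{(t)}\bigr]\;=\;\gamma^2\sum_i \|P_{V_t}e_i\|^2\;=\;\gamma^2\,\dim(V_t),
\]
using only the trace identity for the projection onto $V_t$. Since $\Phi\le n$ throughout while $\dim(V_t)\ge n-|\text{frozen}_t|-|\text{dangerous}_t|$, running for $T=K/\gamma^2$ steps with $K$ a large constant and $\E[|\text{dangerous}_T|]\le n/16$ forces $\E[|\text{frozen}_T|]=\Omega(n)$; otherwise $\E[\Phi(x^{(T)})]$ would exceed its upper bound. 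That aggregate-energy step is the missing idea in your sketch.
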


\noindent {\em Remark:} 
The right hand side of \eqref{eqcond21} can be set to $(1-\epsilon )n$ for any fixed constant $\epsilon>0$, at the expense of 
$O_\epsilon(1)$  factor loss in other parameters of the theorem; see e.g.~\cite{BCKL14}.

\paragraph{Randomized Rounding:} Chernoff bounds state that if $X_1,\ldots,X_n$ are independent Bernoulli random variables, and $X=\sum_i X_i$ and $\mu = \mathbb{E}[X]$, then 
\[\mathrm{Pr}[ |X - \mu|  \geq \epsilon \mu] \leq 2 \exp(-\epsilon^2 \mu/4)  \qquad \mbox{for } \epsilon \leq 1. \]
Then independent randomized rounding can be viewed as the following (by using Chernoff bounds and union bound, and denoting $\lambda_j = \epsilon_j \sqrt{b_j}$). 

\begin{theorem}[Randomized Rounding]
\label{th:chernoff}
For $j=1,\ldots,m$, pick any $\lambda_j$ satisfying $\lambda_j \leq \sqrt{b_j}$, and
\begin{equation}
\label{eqcond31}
 \sum_j \exp(-\lambda_j^2/4) < 0.5
 \end{equation}
Then independent randomized rounding gives a solution $x'$ such that:
(i) All variables are $0$-$1$, and (ii) $|\langle a_j , x'- x\rangle| \leq \lambda_j \sqrt{b_j}$ for each $j=1,\ldots,m$.
\end{theorem}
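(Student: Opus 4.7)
The plan is to derive Theorem~\ref{th:chernoff} as a direct consequence of the standard multiplicative Chernoff bound stated just before it, applied to the natural independent randomized rounding procedure and combined with a union bound over the $m$ constraints.

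First, I would define $x'$ by setting $x'_i = 1$ independently with probability $x_i$ and $x'_i = 0$ otherwise. This immediately gives property (i), that every coordinate of $x'$ is $0$-$1$. For a fixed row $a_j \in \{0,1\}^n$, the inner product $\langle a_j, x'\rangle = \sum_{i : a_{ji}=1} x'_i$ is then a sum of independent Bernoulli variables whose expectation equals $\langle a_j, x\rangle = b_j$, so the Chernoff bound stated in the preamble applies.

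Next, I would invoke Chernoff with parameter $\epsilon_j := \lambda_j/\sqrt{b_j}$. The hypothesis $\lambda_j \le \sqrt{b_j}$ guarantees $\epsilon_j \le 1$, so the bound is valid in the regime where it was stated, and the target deviation $\epsilon_j b_j = \lambda_j\sqrt{b_j}$ matches the conclusion (ii) exactly. The resulting per-row failure probability is
\[
\Pr\bigl[\,|\langle a_j, x'\rangle - b_j| \ge \lambda_j\sqrt{b_j}\,\bigr] \;\le\; 2\exp(-\epsilon_j^2 b_j/4) \;=\; 2\exp(-\lambda_j^2/4).
\]
A union bound over $j=1,\ldots,m$ together with hypothesis \eqref{eqcond31} yields
\[
\Pr\bigl[\,\exists\, j : |\langle a_j, x' - x\rangle| > \lambda_j\sqrt{b_j}\,\bigr] \;\le\; 2\sum_{j=1}^m \exp(-\lambda_j^2/4) \;<\; 1,
\]
so with positive probability the output $x'$ satisfies all $m$ deviation bounds simultaneously.

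There is no real obstacle here; the statement is essentially just a restatement of Chernoff plus a union bound, packaged to parallel Theorem~\ref{th:lm} so the two rounding guarantees can be compared side by side. The only small care points are (a) checking that the reparametrization $\lambda_j = \epsilon_j \sqrt{b_j}$ keeps us inside the $\epsilon_j \le 1$ regime, which is exactly what the extra hypothesis $\lambda_j \le \sqrt{b_j}$ buys, and (b) noting that the constant $0.5$ on the right-hand side of \eqref{eqcond31}, together with the factor $2$ from the two-sided Chernoff tail, is what makes the union bound strictly less than $1$ and hence guarantees existence (and in fact efficient sampling) of a good $x'$.
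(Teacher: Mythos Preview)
Your proposal is correct and matches the paper's own argument exactly: the paper does not give a formal proof but simply remarks that the theorem follows ``by using Chernoff bounds and union bound, and denoting $\lambda_j = \epsilon_j \sqrt{b_j}$,'' which is precisely what you carry out. Your attention to the role of the hypothesis $\lambda_j \le \sqrt{b_j}$ (ensuring $\epsilon_j \le 1$) and the constant $0.5$ (making the union bound strictly less than $1$) fills in the only details the paper leaves implicit.
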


\paragraph{Iterated Rounding~\cite{Lau-book}:} 
This is based on the following linear-algebraic fact.
\begin{theorem}
\label{th:iter}
If $m<n$, then there is a solution $x' \in [0,1]^n$ such that  (i) $x'$ has at least $n-m$ variables set to $0$ or $1$ and, (ii) $A(x'-x) = 0$  (i.e.,~$b=Ax'$).
\end{theorem}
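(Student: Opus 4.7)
The plan is to apply a standard vertex-of-polytope argument. I would define the polytope
\[ P = \{ y \in \R^n : Ay = b,\; 0 \le y_i \le 1 \text{ for all } i \}. \]
Since $x \in P$ by hypothesis, $P$ is non-empty, and it is bounded because of the box constraints. Hence $P$ has at least one vertex (extreme point), and I would take $x'$ to be any such vertex. Algorithmically, $x'$ can be obtained by optimizing an arbitrary linear objective over $P$ via the simplex method, which is guaranteed to return a basic feasible solution.

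Next I would invoke the standard characterization of vertices of a polyhedron: at the vertex $x'$, there must exist $n$ linearly independent constraints from the defining system that are tight. The equality constraints $Ay = b$ contribute at most $m$ linearly independent tight constraints (they are tight at every point of $P$). Therefore at least $n-m$ further linearly independent tight constraints must come from the box inequalities $y_i \ge 0$ and $y_i \le 1$. Since each such box inequality involves a single, distinct coordinate, the $n-m$ tight box constraints pin $n-m$ distinct coordinates of $x'$ to values in $\{0,1\}$, which establishes property~(i).

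Property~(ii) is immediate: by construction $x' \in P$, so $Ax' = b = Ax$ and hence $A(x'-x)=0$. The argument has essentially no hard step — it is just the definition of a basic feasible solution in linear programming — and the only point worth being careful about is the linear-independence/distinct-index bookkeeping that converts ``$n-m$ tight box constraints'' into ``$n-m$ integer coordinates.'' If one wanted a slightly slicker derivation avoiding explicit vertex language, one could instead iteratively drop a coordinate that is neither $0$ nor $1$: while strictly more than $m$ coordinates of the current iterate are fractional, the kernel of $A$ restricted to those coordinates is non-trivial, so one can move in a direction $d \in \ker A$ supported on fractional coordinates until another coordinate hits $0$ or $1$, preserving $Ay=b$ and $y \in [0,1]^n$; this terminates with at most $m$ fractional coordinates.
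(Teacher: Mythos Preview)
Your argument is correct. In fact, the paper does not supply its own proof of Theorem~\ref{th:iter}: it is quoted as a standard linear-algebraic fact underlying iterated rounding, with a citation to the book of Lau~et~al. The vertex-of-polytope argument you give (and the equivalent kernel-walk variant you sketch at the end) is exactly the textbook proof one finds in that reference, so there is nothing to compare---your write-up fills in the details the paper deliberately omits.
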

In iterated rounding applications, if $m>n$  then some cleverly chosen constraints are dropped until $m<n$ and  integral variables are obtained. This is done repeatedly.

\paragraph{Strengths of LM rounding:} 
Note that if we set $\lambda_j \in \{0,\infty\}$ in LM rounding, then
it gives a very similar statement to Theorem \ref{th:iter}.
E.g., if we only care about some $m=n/2$ constraints  
then Theorem \ref{th:iter} gives an $x'$ with at least $n/2$ integral variables and $a_jx=a_jx'$ for all these $m$ constraints.
Theorem \ref{th:lm} (and the remark below it) give the same guarantee if we set $\lambda_j=0$ for all constraints.  In general, LM rounding can be much more flexible as it allows arbitrary $\lambda_j$.

Second, LM rounding is also related to randomized rounding.
Note that \eqref{eqcond31} and \eqref{eqcond21} have the same left-hand-side. However, the right-hand-side of \eqref{eqcond21} is $\Omega(n)$, while that of \eqref{eqcond31} is $O(1)$.
This actually makes a huge difference. In particular, in \eqref{eqcond31} one cannot set $\lambda_j=1$ for more than a couple of constraints (to get an $o(\sqrt{b_j})$ error bound on constraints), while in \eqref{eqcond21}, one can even set $\lambda_j=0$ for $O(n)$ constraints. In fact, almost all non-trivial results in discrepancy \cite{Spencer85,Srin97,Matousek2010geometric} are based on this ability.

\paragraph{Weaknesses of LM rounding:}  First, Theorem~\ref{th:lm} only gives a partially integral solution instead of a fully integral one as in Theorem \ref{th:chernoff}. 

Second, and more importantly, it only gives additive error bounds instead of multiplicative ones. In particular, note the $\lambda_j \|a_j\|_2$ vs $\lambda_j \sqrt{b_j}$ error in Theorems \ref{th:lm} and \ref{th:chernoff}.
E.g.,~for a constraint 
$\sum_i x_i = \log n$, Theorem \ref{th:chernoff} gives $\lambda \sqrt{\log n}$ error but Theorem \ref{th:lm} gives a much higher $\lambda \sqrt{n}$ error. 
So, while randomized rounding can give a good multiplicative error like $a_jx' \le (1 \pm \epsilon_j)b_j$, LM rounding is completely insensitive to $b_j$.

Finally, iterated rounding works extremely well in many settings where Theorem \ref{th:lm} does not give anything useful.
E.g.,~in problems involving exponentially many constraints such as the degree bounded spanning tree problem. The problem is that if
$m$ is exponentially large,
then the $\lambda_j$'s in Theorem \ref{th:lm} need to be very large to satisfy \eqref{eqcond31}.

\subsection{Our Results and techniques}\label{sec:result}

Our first  result is the following improvement over Theorem~\ref{th:lm}.

\begin{theorem}\label{thm:main}
There is a constant $K_0>0$ and randomized polynomial time algorithm that given any $n>K_0$, fractional solution $y \in [0,1]^n$, $m\le 2^n$ linear constraints $a_1,\ldots,a_m \in \R^n$  and  $\lambda_1,\cdots,\lambda_m\ge 0$ with $\sum_{j=1}^m e^{-\lambda_j^2/K_0} <\frac{n}{16}$,  finds a solution $y'\in [0,1]^n$ such that:
\begin{eqnarray}
|\langle y'-y, a_j\rangle | & \le & \lambda_j\cdot \sqrt{W_j(y)} + \frac{1 }{n^2} \cdot \|a_j\|, \quad \forall j=1,\cdots m  \label{eq:main-violation}\\
y'_i\in\{0,1\}, &&\mbox{ for $\Omega(n)$ indices }i\in \{1,\cdots, n\} \label{eq:main-integral}
\end{eqnarray}
Here $W_j(y):=\sum_{i=1}^n a_{ji}^2\cdot \min\{y_i,1-y_i\}^2$ for each $j= 1,\cdots m$. 
\end{theorem}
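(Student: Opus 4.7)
The plan is to reduce the target bound $\lambda_j\sqrt{W_j(y)}$ to the standard $\lambda_j\|\cdot\|_2$ form of Theorem~\ref{th:lm} via a diagonal rescaling of variables. Set $z_i := \min\{y_i, 1-y_i\}$; by reflecting each coordinate with $y_i > 1/2$ (replace $y_i$ with $1-y_i$ and $a_{ji}$ with $-a_{ji}$) I may assume $y_i \le 1/2$ so that $z_i = y_i$. Let $D := \mathrm{diag}(z_1,\ldots,z_n)$ and parametrise $y' = y + Du$. Then $y' \in [0,1]^n$ is equivalent to $u$ lying in the axis-aligned box
\[
B \;:=\; \prod_{i\,:\,z_i > 0}[-1,\beta_i], \qquad \beta_i := (1-y_i)/y_i \;\ge\; 1,
\]
with coordinates having $z_i = 0$ frozen at $u_i = 0$. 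The corner $u_i = -1$ encodes $y'_i = 0$ and $u_i = \beta_i$ encodes $y'_i = 1$. Setting $\hat a_j := Da_j$, the identities
\[
\langle a_j, y' - y\rangle \;=\; \langle \hat a_j, u\rangle \qquad\text{and}\qquad \|\hat a_j\|_2 \;=\; \sqrt{W_j(y)}
\]
align the two forms of the bound.

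I would then run the Lovett--Meka random walk inside $B$ starting at $u(0) = 0$, with the usual active-set projections: take small Gaussian steps in the subspace orthogonal to the gradients of all currently active constraints, where a constraint becomes active either when a coordinate $i$ reaches the boundary of $B$ (gradient $e_i$) or when the running discrepancy $|\langle \hat a_j, u(t)\rangle|$ reaches the threshold $\lambda_j\|\hat a_j\|_2$ (gradient $\hat a_j$). Running the walk for time $O(1)$ and invoking the LM entropy/potential argument under the hypothesis $\sum_j e^{-\lambda_j^2/K_0} < n/16$, with constant probability a constant fraction of coordinates reach the boundary of $B$ before any of the constraint thresholds is crossed. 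Reverting the substitution then yields $|\langle a_j, y' - y\rangle| \le \lambda_j\sqrt{W_j(y)}$ together with $y'_i \in \{0,1\}$ for $\Omega(n)$ indices.

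The additive $\frac{1}{n^2}\|a_j\|$ slack in \eqref{eq:main-violation} arises from two perturbations needed to make the algorithm polynomial time: discretising the continuous walk into steps of size $1/\mathrm{poly}(n)$ (which produces a small boundary and threshold overshoot), and replacing $z_i$ by $z_i + 1/n^{c}$ for a large constant $c$ so that coordinates with tiny $z_i$ do not stall the walk. By choosing the step size and $c$ large enough, each effect contributes at most $O(\|a_j\|/n^2)$ to every constraint's deviation.

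The principal obstacle is adapting the LM entropy/potential analysis from the symmetric box $[-1,1]^n$ to the possibly very asymmetric $B$: when $\beta_i \gg 1$, the one-dimensional expected exit time of $u_i$ from $[-1,\beta_i]$ is $\beta_i$, which can be huge, so the symmetric LM bookkeeping does not apply verbatim. The remedy is that exit via the nearer face $u_i = -1$ still occurs within $O(1)$ time with constant probability (by the reflection principle, Pr[$\min_{s\le T} B_s \le -1$] is bounded below by a positive constant for $T = \Theta(1)$, uniformly in $\beta_i$); this per-coordinate boundary-hitting probability is enough to push the LM entropy argument through and deliver $\Omega(n)$ integral indices while all constraint thresholds remain intact.
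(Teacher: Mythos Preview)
Your diagonal rescaling $y' = y + Du$ with $D = \mathrm{diag}(\min\{y_i,1-y_i\})$ is the right first move, and the paper itself remarks that Theorem~\ref{thm:main} follows ``in a black box manner from LM-rounding by rescaling the polytope and using its symmetry.'' The gap is in your handling of the asymmetric box $B=\prod_i[-1,\beta_i]$. The Lovett--Meka analysis is a \emph{global potential} argument: $\Phi(u)=\sum_i u_i^2$ grows in expectation by $\dim(\sV_t)$ per step, and the cap $\Phi\le n$ (coming from $|u_i|\le 1$) is what forces $\Omega(n)$ coordinates to the boundary. In your box that cap is gone, and your proposed substitute---a per-coordinate reflection-principle bound $\Pr[\min_{s\le T} u_i(s)\le -1]\ge c$---does not apply. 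The coordinate processes $u_i(t)$ are neither independent nor unit-variance Brownian motions: each increment is the $i$th component of a Gaussian projected onto the current active subspace, whose variance is the $(i,i)$ entry of a projection matrix and can be arbitrarily small. Even granting a per-coordinate hitting bound, nothing links it back to the potential argument or to a second-moment estimate that would give $\Omega(n)$ simultaneous hits.

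The black-box route the paper alludes to avoids the asymmetric box entirely: run Theorem~\ref{th:lm} in the \emph{symmetric} sub-box $[-1,1]^n\subseteq B$ from $u=0$. LM then gives $\Omega(n)$ coordinates at $\pm 1$ with all thresholds $|\langle \hat a_j,u\rangle|\le \lambda_j\sqrt{W_j(y)}$ respected; $u_i=-1$ yields $y'_i\in\{0,1\}$ while $u_i=+1$ yields $y'_i=2y_i\in[0,1]$, and since the entire configuration (box, start, two-sided thresholds) is invariant under $u\mapsto -u$, with constant probability $\Omega(n)$ of the tight coordinates are at $-1$. The paper's own proof takes yet another route, because its real target is the matroid version (Theorem~\ref{thm:mat-disc}) where one must remain in $P(\sM)$ and no such symmetry is available. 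It buckets coordinates into $O(\log n)$ dyadic scale classes $U_k,V_k$, caps each coordinate at $\alpha\cdot 2^{-k}$, and---crucially---adds the $O(\log n)$ equality constraints $\sum_{i\in U_k}x_i=\sum_{i\in U_k}y_i$ (and likewise for $V_k$). These equalities bound the number of coordinates that can reach the non-integral cap $\alpha\cdot 2^{-k}$, which is precisely the ingredient your asymmetric-box analysis is missing.
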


\noindent {\em Remarks:} 1) The error $\lambda_j\sqrt{W_j(y)}$ is always smaller than $\lambda_j\|a_j\|$ in LM-rounding and $\lambda_j (\sum_{i=1}^n a_{ji}^2\cdot y_i(1-y_i))^{1/2}$ in randomized rounding. In fact it could even be much less  if the $y_{i}$ are very close to $0$ or $1$.

\noindent 2) The term $n/16$ above can be made  $(1-\epsilon)n$ for any fixed constant $\epsilon>0$, at the expense of worsening other constants (just as in LM rounding).

\noindent 3) The additional error term $\frac{1}{n^2} \cdot \|a_j\|$ above is negligible and can be reduced to $\frac{1}{n^c} \cdot \|a_j\|$ for any constant $c$, at the expense of a larger running time $n^{O(c)}$.

We note that Theorem~\ref{thm:main} can also be obtained in a ``black box'' manner from LM-rounding (Theorem~\ref{th:lm}) by rescaling the polytope and using its symmetry.\footnote{We thank an anonymous reviewer for pointing this out.} However, such an approach does not work in the setting of matroid polytopes (Theorem~\ref{thm:mat-disc}). In the matroid case, we need to modify LM-rounding as outlined below. 

\medskip 

\noindent{\bf Applications:}  We focus on  {\em linear system rounding} as the prime example. Here,
given matrix $A\in [0,1]^{m\times n}$ and vector $b\in \mathbb{Z}_+^m$, the goal is to find a vector $z\in \{0,1\}^n$ satisfying $Az = b$. As this is NP-hard, the focus has been on finding a $z\in \{0,1\}^n$ where $Az \approx b$.

Given any fractional 
solution $y\in[0,1]^n$ satisfying $Ay = b$, using Theorem~\ref{thm:main} iteratively we can obtain an integral vector $z\in \{0,1\}^n$ with
\begin{equation}\label{eq:set-pack-bound}
|a_j z -b_j|\le \min\left\{ O(\sqrt{n \log(2+m/n)})\,,\,\, \sqrt{L \cdot b_j} + L \right\},\quad \forall j\in[m],
\end{equation}
where  $L=O(\log n \log m)$ and $[m]:=\{1,2,\cdots m\}$.\footnote{For any integer $t\ge 1$, we use the notation $[t]:=\{1,2,\cdots,t\}$.}  
Previously known algorithms could provide a bound of either $O(\sqrt{n \log(m/n)})$ for all constraints~\cite{LM12} 
  or $O(\sqrt{\log m}\cdot \sqrt{b_j} + \log m)$ for all constraints (Theorem~\ref{th:chernoff}). 
 Note that this does not imply a $\min\{ \sqrt{n \log(m/n)}, \sqrt{\log m}\cdot \sqrt{b_j} + \log m\}$ violation per constraint,
as in general it is not possible to combine two integral solutions and achieve the better of their violation bounds on all constraints. To the best of our knowledge, even the existence of an integral solution satisfying the bounds in~\eqref{eq:set-pack-bound} was not known prior to our work.

In the setting where the matrix $A$ is ``column sparse'', i.e. each variable appears in at most $\Delta$ constraints, we obtain a more refined error of 
\begin{equation}\label{eq:set-pack-bound2}
|a_j y -b_j |\le \min\left\{ O(\sqrt{\Delta}\log n)\,,\,\, \sqrt{L \cdot b_j} + L \right\},\quad \forall j\in [m],
\end{equation}
 where $L=O(\log n\cdot \log m)$. Previous algorithms could separately achieve bounds of $\Delta-1$~\cite{BF81}, $O(\sqrt{\Delta}\log n)$~\cite{LM12} or 
$O(\sqrt{\log \Delta}\cdot \sqrt{b_j} + \log \Delta)$~\cite{LLRS01}. For clarity, Figure~\ref{fig:sp-graph} plots the violation bounds achieved by these different algorithms as a function of the right-hand-side $b$ when $m=n$ (we assume $b,\Delta\ge \log^2 n$). Note again that since there are multiple constraints 
we can not simply combine algorithms to achieve the smaller of their violation bounds.

\begin{figure}
\includegraphics[scale=0.85]{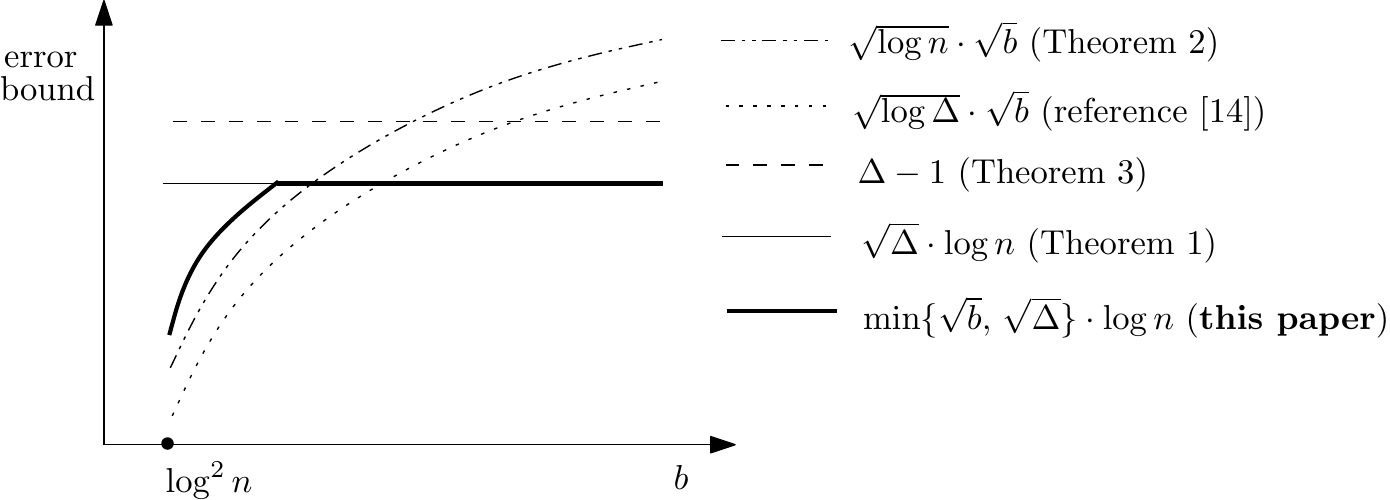}
\caption{Additive violation bounds for linear system rounding when $\Delta\ge \log^2n$ and $b\ge \log^2n$.\label{fig:sp-graph}}
\end{figure}

One can also combine the bounds in~\eqref{eq:set-pack-bound} and~\eqref{eq:set-pack-bound2}, and use some additional ideas from discrepancy to obtain:
 \begin{equation}\label{eq:set-pack-bound3}
|a_j y -b_j|\, \le \, O(1)\cdot \min\left\{ \sqrt{j}, \,  \sqrt{n \log(2+\frac{m}{n})}, \, \sqrt{L \cdot b_j} + L, \,  \sqrt{\Delta}\log n \right\}, \quad  \forall j \in [m].
\end{equation}

\noindent{\bf Matroid Polytopes:} Our main result is an extension of Theorem~\ref{thm:main} where the fractional solution lies in a matroid polytope in addition to satisfying the linear constraints $\{a_j\}_{j=1}^m$.  Recall that a matroid \sM is a tuple $(V,\sI)$ where $V$ is the groundset of elements and $\sI\sse 2^V$ is a collection of independent sets satisfying the hereditary and exchange properties~\cite{Sch}. The rank function $r:2^V\rightarrow \mathbb{Z}$ of a matroid is defined as $r(S)=\max_{I\in \sI, I\sse S}\, |I|$. The matroid polytope (i.e. convex hull of all independent sets) is given by the following linear inequalities:

\[P(\sM)\quad := \quad \left\{ x\in \R^n\,\, :\,\, \sum_{i\in S}  x_i \le r(S) \,\, \forall S\sse V,\,\, \, x\ge 0\right\}.\]

As is usual when dealing with matroids, we assume access to an `` independent set oracle'' for \sM that given any subset $S\sse V$ returns whether/not $S\in \sI$ in polynomial  time. 

\begin{theorem}\label{thm:mat-disc}
There is  a randomized polynomial time algorithm that given  matroid $\sM$, fractional solution $y\in P(\sM)$, linear constraints $\{a_j \in \mathbb{R}^n\}_{j=1}^m$ 
and values $\{\lambda_j\}_{j=1}^m$ satisfying the conditions in Theorem~\ref{thm:main}, finds a solution $y'\in P(\sM)$ satisfying~\eqref{eq:main-violation}-\eqref{eq:main-integral}. 
\end{theorem}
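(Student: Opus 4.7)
The plan is to run the scaled Brownian walk of Theorem~\ref{thm:main} but, in addition to the discrepancy halfspaces and the integrality faces $y_i\in\{0,1\}$, also treat every tight matroid rank inequality $\sum_{i\in S} y_i = r(S)$ as a ``wall'' that the walk reflects off. Concretely, at each infinitesimal time step the Gaussian increment is sampled from the subspace $V(t)$ orthogonal to (a) the rows $a_j$ of all discrepancy constraints currently locked at $\pm(\lambda_j\sqrt{W_j(y)}+\|a_j\|/n^2)$, (b) the characteristic vectors $\mathbf{1}_S$ of all tight rank sets at $y(t)$, and (c) the coordinate vectors $e_i$ for already-integral $i$. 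The per-coordinate scaling by $\min\{y_i,1-y_i\}$ is kept verbatim from the proof of Theorem~\ref{thm:main}, which is what produces the $W_j(y)$ term in~\eqref{eq:main-violation}.

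The key structural input from matroid theory is that at any $y\in P(\sM)$ the collection of tight rank sets has a basis (in $\mathbb{R}^V$) that is a \emph{laminar} family. Consequently, if $N_t=\{i:y_i(t)\in(0,1)\}$ is the set of non-integral coordinates at time $t$, then the linearly independent tight rank constraints restricted to $N_t$ number at most $|N_t|-1$ (otherwise $y$ restricted to $N_t$ would be a vertex of the contracted matroid polytope and already integral, contradicting the definition of $N_t$). Crucially, the tight rank constraints consume dimensions only inside the non-integer block, not globally. Combined with the LM/Theorem~\ref{thm:main} dimension accounting that the tight discrepancy constraints consume at most $n/16$ dimensions in expectation (this is exactly where $\sum_j e^{-\lambda_j^2/K_0}<n/16$ enters), one obtains $\dim V(t)\ge c\cdot|N_t|$ as long as $|N_t|=\Omega(n)$.

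With this dimension lower bound in hand, the martingale/exponential-potential analysis carries over from Theorem~\ref{thm:main} unchanged: a union bound with $m\le 2^n$ discrepancy constraints yields the violation bound~\eqref{eq:main-violation}, and the standard Brownian-hitting-time argument shows that within $O(\log n)$ phases at least a constant fraction of coordinates must become integral, giving~\eqref{eq:main-integral}. Membership $y'\in P(\sM)$ is automatic since tight rank constraints are locked once activated. The separation oracle needed to identify tight rank sets (equivalently, to project onto $V(t)$) is implemented in polynomial time from the independence oracle via standard matroid submodular-minimization routines.

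The main obstacle is the dimension bookkeeping in the combined system. In the cube setting of Theorem~\ref{thm:main} every non-discrepancy tight constraint freezes a variable, so ``walls hit'' and ``progress made'' coincide. In the matroid setting, tight rank constraints are \emph{silent} walls that shrink $\dim V(t)$ without creating integrality. The entire argument hinges on showing, via the laminarity of tight rank sets, that these silent walls only eat into the dimensions spanned by $N_t$ and therefore leave $\Omega(|N_t|)$ directions of motion as long as the LM discrepancy budget is not exhausted. A secondary technicality is that the scaled walk of Theorem~\ref{thm:main} works with increments in the full ambient space; we must verify that reflecting from matroid facets interacts correctly with the $\min\{y_i,1-y_i\}$ scaling, which it does because the rank constraints are coordinate-monotone and the scaling preserves the sign structure used in the potential argument.
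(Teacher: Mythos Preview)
Your high-level plan is the same as the paper's: run the scaled walk of Theorem~\ref{thm:main} while additionally staying orthogonal to all tight rank constraints, use matroid structure to bound how many dimensions those rank constraints can consume, and then reuse the potential/martingale analysis verbatim. The difficulty you correctly flag (rank constraints are ``silent walls'' that eat dimension without producing integrality) is exactly the issue.

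However, your dimension bookkeeping has a fatal gap. You claim that the linearly independent tight rank constraints, restricted to the fractional block $N_t$, number at most $|N_t|-1$, arguing that otherwise $y$ would be a vertex and hence integral. This bound is correct but useless: after also subtracting the (up to) $n/16$ tight discrepancy constraints, you are left with
\[
|N_t| \;-\; (|N_t|-1) \;-\; \tfrac{n}{16} \;=\; 1 - \tfrac{n}{16} \;<\; 0,
\]
so you cannot conclude $\dim V(t)\ge c\,|N_t|$ from this. The missing ingredient is that the tight rank sets can be uncrossed to a \emph{chain} (not merely a laminar family), and then integrality of the rank values combined with strict fractionality of the coordinates forces consecutive sets in the chain to differ by at least two elements; this yields at most $|N_t|/2$ linearly independent tight rank constraints (Claim~\ref{cl:mat-tight-rank} in the paper). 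With $|N_t|/2$ in place of $|N_t|-1$ the arithmetic works: roughly $|N_t|/2 - n/16 - O(\log n)$ directions remain, and the potential argument goes through. Without this factor-of-two saving, the proposal does not establish the claimed lower bound on $\dim V(t)$, and the rest of the argument collapses.
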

We note that the same result can be obtained even if we want to compute a base (maximal independent set) in the matroid: the only difference here is to add the equality $\sum_{i\in V} x_i = r(V)$ to $P(\sM)$ which corresponds to the base polytope of \sM.   

The fact that we can {\em exactly} preserve  matroid constraints leads to a number of  improvements:

 \smallskip
\noindent {\em Degree-bounded matroid basis} (\dbm). Given a matroid on elements $[n]:=\{1,2,\cdots n\}$ with costs $d:[n]\rightarrow \mathbb{Z}_+$ and $m$ ``degree constraints'' $\{S_j, b_j\}_{j=1}^m$ where each $S_j\sse [n]$ and $b_j\in \mathbb{Z}_+$, the goal is to find a minimum-cost basis $I$ in the matroid that satisfies $|I\cap S_j|\le b_j$ for all $j\in[m]$. Since even the feasibility problem is NP-hard, we consider {\em bicriteria} approximation algorithms that violate the degree bounds. We obtain an algorithm where the solution costs at most the optimal and the degree bound violation is as in~\eqref{eq:set-pack-bound3}; here $\Delta$ denotes the maximum number of sets $\{S_j\}_{j=1}^m$ containing any element.

Previous algorithms achieved approximation ratios of  $(1,b+O(\sqrt{b\log n}))$~\cite{CVZ10}, based on randomized swap rounding,  and $(1,b+\Delta-1)$~\cite{KLS08} based on iterated rounding. 
Again, these bounds could not be combined together as they used different algorithms. 
We note that in general the $(1,b+O(\sqrt{n\log(m/n)}))$ approximation is the best possible (unless P=NP) for this problem~\cite{CNN11,BKKNP13}.

\smallskip\noindent {\em Multi-criteria matroid basis}.  Given a matroid on elements $[n]$ with $k$ different cost functions $d_i:[n]\rightarrow \mathbb{Z}_+$ (for $i=1,\cdots,k$) and budgets $\{B_i\}_{i=1}^k$, the goal is to find (if possible) a  basis  $I$ with $d_i(I)\le B_i$ for each $i\in [k]$. We obtain an algorithm that for any $\epsilon>0$ finds in $n^{O(k^{1.5}\,/\,\epsilon)}$ time, a basis $I$ with $d_i(I)\le (1+\epsilon)B_i$ for all $i\in[k]$. Previously, \cite{GRSZ14} obtained such an algorithm with $n^{O(k^2\,/\,\epsilon)}$ running time.

\smallskip\noindent {\em Low congestion routing}. Given a directed graph $G=(V,E)$ with edge capacities $b:E\rightarrow \mathbb{Z}_+$,  $k$ source-sink pairs $\{(s_i,t_i)\}_{i=1}^k$ and a length bound $\Delta$, the goal is to find an $s_i-t_i$ path $P_i$ of length at most $\Delta$ for each pair $i\in[k]$ such that the number $N_e$ of paths using any edge $e$ is at most $b_e$. Using an LP-based reduction~\cite{CVZ10} this can be cast as an instance of \dbm. So we obtain violation bounds as in~\eqref{eq:set-pack-bound3} which implies:
\[N_e \,\,\le \,\, b_e+\min\left\{O(\sqrt{\Delta }\log n),\, O(\sqrt{b_e} \log n+\log^2n)\right\},\quad \forall e\in E.\] 
Here $n=|V|$ is the number of vertices. Previous algorithms achieved  bounds of $\Delta-1$~\cite{KLRTVV87} or $O(\sqrt{\log \Delta}\cdot \sqrt{b_j} + \log \Delta)$~\cite{LLRS01} separately. We can also handle a richer set of routing requirements: given a laminar family ${\cal L}$ on the $k$ pairs, with a requirement $r_T$ on each set $T\in {\cal L}$, we want to find a multiset of paths so that there are at least $r_T$ paths between the pairs in each $T\in {\cal L}$. Although this is not an instance of \dbm, the same approach works.

\medskip

\noindent {\bf Overview of techniques:} Our algorithm in Theorem \ref{thm:main} is similar to the Lovett-Meka algorithm, and is also based on performing a Gaussian random walk at each step in a suitably chosen subspace.
However, there some crucial differences. First, instead of updating each variable by the standard Gaussian $N(0,1)$, the variance for variable $i$ is chosen proportional to $\min(y_i,1-y_i)$, i.e.~proportional to how close it is to the boundary $0$ or $1$. This is crucial for getting the multiplicative error instead of the additive error in the constraints.
However, this slows down the ``progress" of variables toward reaching $0$ or $1$. To get around this, we add $O(\log n)$ additional constraints to define the subspace where the walk is performed: these restrict the total fractional value of variables in a particular ``scale'' to remain fixed. Using these we can  ensure that enough variables eventually reach $0$ or $1$. 

In order to handle the matroid constraints (Theorem~\ref{thm:mat-disc}) we need to incorporate them (although they are exponentially many) in defining the subspace where the random walk is performed. One difficulty that arises here is that we can no longer implement the random walk using ``near tight'' constraints as in~\cite{LM12} since we are unable to bound the dimension of near-tight matroid constraints.  However, as is well known, the dimension of {\em exactly} tight matroid constraints is at most $n/2$ at any (strictly) fractional solution, and so we implement the random walk using exactly tight constraints. This requires us to truncate certain steps in the random walk (when we move out of the polytope), but we show that the effect of such truncations is negligible.


\section{Matroid Partial Rounding}\label{sec:main-rounding}
In this section we will prove Theorem \ref{thm:mat-disc} which also implies Theorem \ref{thm:main}.

We may assume, without loss of generality, that $\max_{j=1}^m\lambda_j\le n$. This is because setting $\mu_j=\min\{\lambda_j , n\}$ we have $\sum_{j=1}^m e^{-\mu_j^2/K_0} \le \sum_{j=1}^m e^{-\lambda_j^2/K_0}  + m\cdot e^{-n} < \frac{n}{16}+1$ (we used the assumption $m\le 2^n$). So we can apply  Theorem \ref{thm:mat-disc} with $\mu_j$s instead of $\lambda_j$s to obtain a stronger result.

Let $y \in \R^n$ denote the initial solution. The algorithm will start with $X_0=y$ and update this vector over time. Let $X_t$ denote the vector at time $t$ for $t=1,\ldots,T$. The value of $T$ will be defined later.
Let $\ell= 3\lceil \log_2 n\rceil$.
We classify the $n$ elements into $2\ell$ classes based on their initial values $y(i)$ as follows.

$$U_k:=\left\{ 
\begin{array}{ll}
\left\{i\in[n] : 2^{-k-1} < y(i) \le 2^{-k} \right\} & \mbox{ if } 1\le k\le \ell-1\\
\left\{i\in[n] : y(i) \le 2^{-\ell} \right\} & \mbox{ if } k = \ell.
\end{array}\right.$$

$$V_k:=\left\{ 
\begin{array}{ll}
\left\{i\in[n] : 2^{-k-1} < 1-y(i) \le 2^{-k} \right\} & \mbox{ if } 1\le k\le \ell-1\\
\left\{i\in[n] : 1-y(i) \le 2^{-\ell} \right\} & \mbox{ if } k = \ell.
\end{array}\right.$$

Note that the $U_k$'s partition elements of value (in $y$) between $0$ and $\frac{1}{2}$ and the $V_k$'s form a symmetric partition of elements valued between $\frac{1}{2}$ and $1$. 
This partition does not change over time, even though the value of variables might change.
We define the ``scale'' of each element as:
$$s_i \,\,:=\,\, 2^{-k}, \qquad \forall i\in U_k\cup V_k,\quad \forall k\in [\ell].$$

Define $W_j(s)= \sum_{i=1}^n a_{ji}^2\cdot s_i^2$ for each $j\in[m]$. Note that $W_j(s)\ge W_j(y)$ and
$$W_j(s) - 4\cdot W_j(y) \le \sum_{i=1}^n a_{ji}^2\cdot \frac{1}{n^6} = \frac{\|a_j\|^2}{n^6}.$$
So $\sqrt{W_j(y)} \le \sqrt{W_j(s)} \le 2\sqrt{W_j(y)} +\frac{\|a_j\|}{n^3}$. Our algorithm will find a solution $y'$ with $\Omega(n)$ integral variables such that:
$$|\langle y'-y, a_j\rangle | \le \lambda_j\cdot \sqrt{W_j(s)} + \frac{1}{n^3} \cdot \|a_j\|  , \quad \forall j\in[m].$$

This suffices to prove Theorem~\ref{thm:mat-disc} as  
$$\lambda_j\cdot \sqrt{W_j(s)} + \frac{1}{n^3} \cdot \|a_j\|   \le 2\lambda_j\cdot \sqrt{W_j(y)} + \left( \frac{1}{n^3}+\frac{\lambda_j}{n^3}\right) \cdot \|a_j\| \le 2\lambda_j\cdot \sqrt{W_j(y)} + \frac{\|a_j\| }{n^2}.$$

Consider the polytope $\sQ$ of points $x \in \R^n$ satisfying the following constraints.
\begin{eqnarray}
x & \in &  P(\sM), \label{eq:Q-mat}\\
 |\langle x-y, a_j\rangle | & \le &  \lambda_j\cdot \sqrt{W_j(s)} +\frac1{n^3}\cdot \|a_j\| \qquad \forall j\in[m], \label{eq:Q-deg}\\
 \sum_{i\in U_k} x_i  & = & \sum_{i\in U_k} y_i \qquad  \qquad  \qquad  \qquad \qquad \forall k \in [\ell], \label{eq:Q-part1}\\
 \sum_{i\in V_k} x_i  & = & \sum_{i\in V_k} y_i  \qquad  \qquad  \qquad \qquad  \qquad \forall k \in [\ell], \label{eq:Q-part2}\\
 0 \,\,\, \le  \,\,\, x_i & \le & \min\{ \alpha \cdot 2^{-k}, 1\}   \qquad  \qquad   \qquad \forall i\in U_k, \, \forall k\in [\ell], \label{eq:Q-var1}\\
 0\,\,\,  \le \,\,\, 1- x_i & \le&  \min\{ \alpha \cdot 2^{-k}, 1\}\qquad  \qquad \qquad   \forall i\in V_k, \, \forall k\in [\ell].\label{eq:Q-var2}
\end{eqnarray}

Above $\alpha=40$ is a constant whose choice will be clear later. 
The algorithm will maintain the invariant that at any time $t \in [T]$, the solution $X_t$ lies in $\sQ$.
In particular the constraint \eqref{eq:Q-mat} requires that $X_t$ stays in the matroid polytope.
Constraint \eqref{eq:Q-deg} controls the violation of the side  constraints over all time steps.
The last two constraints \eqref{eq:Q-var1}-\eqref{eq:Q-var2} enforce that variables in $U_k$ (and symmetrically $V_k$) do not deviate far beyond their original scale of $2^{-k}$. The constraints \eqref{eq:Q-part1} and \eqref{eq:Q-part2} 
ensure that throughout the algorithm, the total value of elements in $U_k$ (and $V_k)$ stay equal to their initial sum (in $y$).
These constraints will play a crucial role in arguing that the algorithm finds a partial coloring.
Note that there are only $2 \ell$ such constraints.

In order to deal with complexity issues, we will assume (without loss of generality,  by scaling) that all entries in the constraints describing $\sQ$ are integers bounded by some value $B$. 
Our algorithm will then run in time polynomial in $n,m$ and $\log_2 B$, given  an independent set oracle for the matroid $\sM$. 
Also, our algorithm will only deal with points having rational entries of small ``size''. Recall that the size of a rational number is the number of bits needed to represent it, i.e.  the size of  $p/q$ (where $p,q\in \mathbb{Z}$) is $\log_2 |p| + \log_2|q|$.

\medskip

\noindent {\bf The Algorithm:}
Let $\gamma=n^{-6}$ and $T=K/\gamma^2$ where $K:=10\alpha^2$. The algorithm starts with solution $X_0 = y \in \sQ$, and does the following at each time step $t=0,1,\cdots, T$:
\begin{enumerate}
\item Consider the set of constraints of $\sQ$ that are tight at the point $x=X_t$, and define the following sets based on this. 
\begin{enumerate}
\item Let $\sC^{var}_t$ be the set of tight variable constraints among~\eqref{eq:Q-var1}-\eqref{eq:Q-var2}. This consists of:
\begin{enumerate}
\item $i\in U_k$ (for any $k$) with $X_t(i)=0$ or $X_t(i)=\min\{\alpha \cdot 2^{-k},\, 1\}$; and  
\item $i\in V_k$ (for any $k$) with $X_t(i)=1$ or $X_t(i)=\max\{1-\alpha \cdot 2^{-k},\, 0\}$.
\end{enumerate}
 \item Let $\sC^{side}_t$ be the set of tight side constraints from~\eqref{eq:Q-deg}, i.e. those $j\in[m]$ with $$|\langle X^t - y, a_j\rangle | =\lambda_j\cdot \sqrt{W_j(s)}+\frac1{n^3}\, \|a_j\|.$$
\item Let $\sC^{part}_t$ denote the set of the $2\ell$ equality constraints~\eqref{eq:Q-part1}-\eqref{eq:Q-part2}.  
\item\label{step:rank-tight} Let $\sC^{rank}_t$ be a maximal linearly independent set of tight rank constraints  
from~\eqref{eq:Q-mat}. As usual, a set of constraints is said to be linearly independent if the corresponding coefficient vectors are linearly independent. Since $\sC^{rank}_t$ is maximal, every tight rank constraint is a linear combination of  constraints in $\sC^{rank}_t$. 
By Claim~\ref{cl:mat-tight-rank},  $|\sC^{rank}_t| \leq n/2$.
\end{enumerate}
\item Let $\sV_t$ denote the subspace orthogonal to all the constraints in $\sC^{var}_t$, $\sC^{side}_t$, $\sC^{part}_t$ and $\sC^{rank}_t$.
Let $D$ be an $n\times n$ diagonal matrix with entries $d_{ii} = 1/s_i$, and let $\sV'_t$ be the subspace
$\sV'_t = \{Dv : v \in \sV_t\}.$
As $D$ is invertible, $\dim(\sV'_t)  = \dim(\sV_t)$.
 
\item \label{step:orthn} Let $\{b_1,\ldots,b_k\}$ be an  almost orthonormal  basis of $\sV'_t$ given by Fact~\ref{fact:gauss-elim}. Note that all entries in these vectors are rationals of size $O(n^2\log B)$.

\item\label{step:gaussian} Let $G_t$ be a random direction defined as  
$G_t := \sum_{h=1}^k g_h b_h$ where the $g_h$ are independent $\{-1,+1\}$ Bernoulli random variables. 

\item Let $\oG_t := D^{-1} G_t$.  As $G_t \in \sV'_t$, it must be that $G_t = D v$ for some $v \in \sV_t$ and thus $\oG_t = D^{-1}G_t \in \sV_t$. Note that all entries in $\oG_t$ are rationals of size $O(n^3\log B)$. 

\item Set $Y_t=X_t + \gamma\cdot \oG_t$.
\begin{enumerate}
\item  If $Y_t\in \sQ$ then $X_{t+1}\gets Y_t$ and continue to the next iteration.
\item \label{step:walk-trunc} Else $X_{t+1}\gets $ the point in $\sQ$ that lies on the line segment $(X_t,Y_t)$ and is closest to $Y_t$. This can be found by binary search and testing membership in the matroid polytope. By Claim~\ref{cl:bin-search} it follows that the number of steps in the binary search is at most $O(n\log B)$.
\end{enumerate} 
\end{enumerate}

This completes the description of the algorithm. We actually do not need to compute the tight constraints %
from scratch in each iteration. We start the algorithm off with a strictly feasible solution $y\in \sQ$ which does not have any tight constraint other than~\eqref{eq:Q-part1}-\eqref{eq:Q-part2}. Then, the only place a new constraint gets tight is Step~\ref{step:walk-trunc}: at this point, we add the new constraint to the appropriate set among $\sC^{var}_t$, $\sC^{side}_t$  and $\sC^{var}_t$ and continue.

In order to keep the analysis clean and convey the main ideas, we will assume that the basis $\{b_1,\cdots b_k\}$ in Step~\ref{step:orthn} is exactly orthonormal. When the basis is ``almost orthonormal'' as given in Fact~\ref{fact:gauss-elim}, the additional error incurred is negligible.

\paragraph{Running Time.} Since the  number of iterations   is polynomial, we only need to show that each of the steps in any single iteration can be implemented in polynomial time.  The only step that requires justification is  \ref{step:walk-trunc}, which is shown in Claim~\ref{cl:bin-search}. Moreover, we need to ensure that all points considered in the algorithm have rational coefficients of polynomial size. This is done by a rounding procedure (see Fact~\ref{fact:LP-point}) that given an arbitrary point, finds a nearby rational point of size $O(n^2\log B)$. Since the number of steps in the algorithm is polynomial, the total error incurred by such rounding steps is small.

\begin{cl}\label{cl:bin-search}
The number of binary search iterations performed in Step~\ref{step:walk-trunc} is $O(n^4\log B)$.
\end{cl}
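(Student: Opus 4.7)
\begin{proofof}{Claim \ref{cl:bin-search} (plan)}
The plan is to observe that the binary search is really trying to identify the largest $\beta \in [0,1]$ with $X_t + \beta\gamma \oG_t \in \sQ$, and that this optimal $\beta^*$ is forced to be a rational number of bounded bit-size. Once we establish that $\beta^*$ is rational of bit-size $s = O(n^4 \log B)$, the claim follows from the standard fact that any two distinct rationals of bit-size $s$ differ by at least $2^{-2s}$, so halving the search interval $O(s)$ times isolates $\beta^*$ uniquely.

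Concretely, at $\beta = \beta^*$ some constraint of $\sQ$ becomes newly tight: say $\langle c, X_t + \beta^* \gamma \oG_t\rangle = d$ for some constraint $\langle c, x\rangle \le d$ (or $\ge d$) of $\sQ$, which rearranges to
\[
\beta^* \;=\; \frac{d - \langle c, X_t\rangle}{\gamma\,\langle c, \oG_t\rangle}.
\]
I would then bound each ingredient. By the scaling assumption every explicit constraint $(c,d)$ has integer entries bounded by $B$ (bit-size $O(\log B)$); the exponentially many rank constraints of $P(\sM)$ have $0/1$ coefficients and integer right-hand side in $[0,n]$, so the bit-size bound applies uniformly to them too. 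By hypothesis of the algorithm (and Fact~\ref{fact:LP-point}), $X_t$ has rational coordinates of bit-size $O(n^2 \log B)$; the basis $\{b_h\}$ produced in Step~\ref{step:orthn} has entries of size $O(n^2\log B)$ by Fact~\ref{fact:gauss-elim}, so $\oG_t = D^{-1} G_t$ has coordinates of bit-size $O(n^3\log B)$; and $\gamma = n^{-6}$ has bit-size $O(\log n)$. Summing $n$ products, $\langle c, X_t\rangle$ has bit-size $O(n^3\log B)$ and $\langle c, \oG_t\rangle$ has bit-size $O(n^4\log B)$. Hence the numerator and denominator of $\beta^*$ above have bit-size $O(n^4\log B)$.

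The main subtlety I anticipate is that the binary search never actually identifies which constraint becomes tight — in particular it does not see the individual matroid rank constraints, only the independent-set oracle. But this is not really an obstacle: the search is testing membership in $\sQ$ as a single predicate, and \emph{whichever} constraint attains the infimum, the uniform bit-size bound on $\beta^*$ above applies (since every rank constraint has $0/1$ coefficients and integer RHS bounded by $n \le B$). Thus once the interval length falls below $2^{-C n^4 \log B}$ for a suitable constant $C$, it contains at most one rational of bit-size $O(n^4\log B)$, which must be $\beta^*$; this rational can then be extracted by continued-fraction rounding and verified directly. This yields the desired bound of $O(n^4 \log B)$ iterations.
\end{proofof}
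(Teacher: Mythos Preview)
Your proposal is correct and follows essentially the same approach as the paper: bound the bit-size of $X_t$ (via Fact~\ref{fact:LP-point}) and of $\oG_t$ (via Fact~\ref{fact:gauss-elim} and the definition of $G_t$), deduce that the crossing parameter is a rational of size $O(n^4\log B)$, and use the gap between distinct rationals of that size to bound the number of halvings. The only cosmetic difference is in how the exact crossing point is recovered at the end: you suggest continued-fraction rounding to isolate the unique small rational in the final interval, whereas the paper takes a violated constraint at the outer endpoint, solves for its exact intersection with the line, and then argues no other constraint can be crossed in the remaining sub-interval. Both terminations are valid and equally easy.
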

\begin{proof}
To reduce notation let $a=X_t$, $d=\gamma \oG_t$ and  $Y(\mu) :=a + \mu\cdot d\in \R^n$ for any $\mu\in \R$.  Recall that Step~\ref{step:walk-trunc} involves finding the maximum value of $\mu$ such that point $Y(\mu) \in \sQ$.

By the rounding procedure (Fact~\ref{fact:LP-point}) we know that $a$ has rational entries of size $O(n^2\log B)$.

We now show that the direction $d$ has rational entries of size $O(n^3\log B)$. This is because (i) the basis vectors $\{b_1,\cdots ,b_k\}$ have rational entries of size $O(n^2\log B)$ by Fact~\ref{fact:gauss-elim}, (ii) $G_t=\sum_{h=1}^k g_h\cdot b_h$ (where each $g_h=\pm 1$)  has rational entries of size $O(n^3\log B)$ and (iii) 
$\oG_t=D^{-1}G_t$ where $D^{-1}$ is a diagonal matrix with rational entries of size $O(n\log B)$.

Next, observe  that for any constraint $\langle a',x\rangle \le \beta$ in $\sQ$, the point of intersection of the hyperplane $\langle a',x\rangle =\beta$ with  line  $\{Y(\mu) : \mu\in \R\}$ is $\mu=\frac{\beta-\langle a',a\rangle}{\langle a', d\rangle}$ which is a rational of size at most $\sigma = O(n^4\log B)$ as $a',a,d,\beta$ all have rational entries of size $O(n^3\log B)$. Let $\epsilon = 2^{-2\sigma}$  be a value such that the difference between any two distinct rationals of size  at most $\sigma$ is more than $\epsilon$. 

In Step~\ref{step:walk-trunc}, we start the  binary search  with the interval $[0,1]$ for $\mu$ where $Y(0)\in \sQ$ and $Y(1)\not\in \sQ$. We perform this binary search until the interval width falls below $\epsilon$, which requires $\log_2 \frac1\epsilon=  O(n^4\log B)$ iterations. At the end, we have two values $\mu_0<\mu_1$ with $\mu_1-\mu_0<\epsilon$ such that $Y(\mu_0)\in \sQ$ and $Y(\mu_1)\not\in \sQ$. Moreover, we obtain a constraint $\langle a',x\rangle \le \beta$ in $\sQ$ that is not satisfied by $Y(\mu_1)$. We set $\mu'$ to be the (unique)  value such that $Y(\mu')$ satisfies this constraint at equality, and set $X_{t+1}=Y(\mu')$. Note that $\mu_0\le \mu' <\mu_1$. To see that $Y(\mu')\in \sQ$, suppose (for contradiction) that some constraint in $\sQ$ is not satisfied at $Y(\mu')$; then the point of intersection of line $\{Y(\mu) : \mu\in \R\}$ with this constraint must be at $\mu\in [\mu_0, \mu')$ which (by the choice of $\epsilon$) can not be a rational of size at most $\sigma$--- a contradiction.   
\end{proof}

\medskip
\noindent {\bf Analysis.} The analysis involves proving the following main lemma.  
\begin{lemma}\label{lem:main}
With constant probability, the final solution $X_T$ has $|\sC^{var}_T|\ge \frac{n}{20}$.
\end{lemma}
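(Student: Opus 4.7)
\bigskip
\noindent\textbf{Proof plan.} My plan is to carry out a Lovett--Meka-style analysis in the $D$-scaled coordinates $Z_t:=DX_t$, tracking the energy $\phi_t:=\|Z_t-Z_0\|^2$ and exploiting the dimension identity
\[
|\sC^{var}_T| \;=\; n \;-\; \dim(\sV_T) \;-\; |\sC^{side}_T| \;-\; |\sC^{part}_T| \;-\; |\sC^{rank}_T|.
\]
Since $|\sC^{part}_T|=2\ell\le n/100$ (for $n\ge K_0$ large) and $|\sC^{rank}_T|\le n/2$ by Claim~\ref{cl:mat-tight-rank}, it suffices to show that $\dim(\sV_T)+|\sC^{side}_T|\le 11n/25$ holds with constant probability.

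The first step is a monotonicity observation: $\dim(\sV_t)$ is non-increasing in $t$. Because $\oG_t\in\sV_t$ is orthogonal to all currently tight variable and side constraints, those constraints remain tight at step $t+1$. For rank constraints, the maximality of $\sC^{rank}_t$ implies that every tight rank inequality at time $t$ lies in the linear span of $\sC^{rank}_t$, so the walk (being orthogonal to that span) preserves all tight rank constraints. The $2\ell$ partition equalities are always tight. Consequently $\dim(\sV_T)\le \frac{1}{T+1}\sum_{t=0}^T \dim(\sV_t)$, reducing our task to bounding the time-average in expectation.

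The second step is the energy computation. The box constraints~\eqref{eq:Q-var1}--\eqref{eq:Q-var2} give $|Z_t(i)-Z_0(i)|\le \alpha$ for every $i$ (examining the two regimes $\alpha s_i\le 1$ and $\alpha s_i>1$ separately), so $\phi_T\le \alpha^2 n$ deterministically. On the other hand, in any non-truncated step, orthonormality of $\{b_h\}$ and $g_h\in\{\pm 1\}$ yield $\|G_t\|^2=\dim(\sV_t)$ identically and $E[G_t\mid X_t]=0$, so $E[\phi_{t+1}-\phi_t\mid X_t]=\gamma^2\dim(\sV_t)$. Telescoping gives $\gamma^2\sum_t E[\dim(\sV_t)]\le \alpha^2 n$, hence the time average is at most $\alpha^2 n/K=n/10$; combined with monotonicity, $E[\dim(\sV_T)]\le n/10$.

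The third step bounds $E[|\sC^{side}_T|]$. For each $j$, the process $M_t^j:=\langle X_t-y,\,a_j\rangle$ has conditional per-step variance at most $\gamma^2\|D^{-1}a_j\|^2=\gamma^2 W_j(s)$, giving total quadratic variation at most $K\cdot W_j(s)$. By Doob's maximal inequality combined with Azuma--Hoeffding, the probability that constraint $j$ ever becomes tight is at most $2\exp(-\lambda_j^2/K_0)$, provided $K_0$ is a large enough constant multiple of $K$. Summing over $j$ with the hypothesis $\sum_j e^{-\lambda_j^2/K_0}<n/16$ yields $E[|\sC^{side}_T|]<n/8$. Applying Markov's inequality to $\dim(\sV_T)+|\sC^{side}_T|$, whose expectation is at most $n/10+n/8=9n/40<11n/25$, gives the desired bound with constant probability and completes the proof.

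The main obstacle is controlling the truncations of Step~\ref{step:walk-trunc}: the truncation fraction $\delta_t\in[0,1)$ depends on $G_t$, so $Z_t$ is not a clean martingale and the identity $E[\phi_{t+1}-\phi_t\mid X_t]=\gamma^2\dim(\sV_t)$ fails in truncated steps. My approach is to charge each truncation to the new tight constraint it necessarily creates, so truncations cannot be both too numerous and fail to advance the dimension-counting bound, and to use the very small parameter $\gamma=n^{-6}$ to argue that the aggregate truncation-induced error in both the energy telescope and the Azuma bound is $o(1)$. The same small-$\gamma$ argument also absorbs the error from the ``almost orthonormal'' basis in Step~\ref{step:orthn}.
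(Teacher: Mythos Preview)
Your approach is essentially the same as the paper's: your potential $\phi_t=\|D(X_t-y)\|^2$ differs from the paper's $\Phi(X_t)=\sum_k 2^{2k}\bigl(\sum_{i\in U_k}X_t(i)^2+\sum_{i\in V_k}(1-X_t(i))^2\bigr)$ only by a linear shift, both give the per-step expected increment $\gamma^2\dim(\sV_t)$ and the deterministic cap $\alpha^2 n$, you both use monotonicity $\sV_0\supseteq\sV_1\supseteq\cdots$, and you both bound truncations by charging each one to a new tight constraint (the paper's Claim~\ref{cl:num-truc-steps}). The only cosmetic difference is that the paper rearranges to get $E[|\sC^{var}_T|]\ge 0.1n$ and then applies a reverse-Markov, whereas you apply Markov directly to $\dim(\sV_T)+|\sC^{side}_T|$; these are equivalent.

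There is one genuine technical slip in your third step. Azuma--Hoeffding uses only the \emph{almost-sure increment bound}, and here the increment $|\gamma\langle \oG_t,a_j\rangle|$ can be as large as $\gamma n\|a_j\|$, so over $T=K/\gamma^2$ steps Azuma yields a tail bound of order $\exp\bigl(-\lambda_j^2 W_j(s)/(Kn^2\|a_j\|^2)\bigr)$, which is far too weak (by a factor of roughly $n^2\|a_j\|^2/W_j(s)$ in the exponent). You correctly compute that the conditional per-step \emph{variance} is at most $\gamma^2 W_j(s)$ and the total predictable quadratic variation is at most $K\,W_j(s)$; to exploit this you need a Bernstein-type martingale inequality such as Freedman's (Theorem~\ref{thm:freedman} in the paper), which gives
\[
\Pr\Bigl[\,|M^j_T|\ge \lambda_j\sqrt{W_j(s)}\,\Bigr]\;\le\;2\exp\!\left(\frac{-\lambda_j^2 W_j(s)/2}{K\,W_j(s)+\tfrac13\,\gamma n\|a_j\|\cdot\lambda_j\sqrt{W_j(s)}}\right)\;\le\;2\exp\!\bigl(-\lambda_j^2/(3K)\bigr),
\]
the last step using $W_j(s)\ge\|a_j\|^2/n^6$, $\lambda_j\le n$, and $\gamma=n^{-6}$. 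With this correction your argument goes through as written. (Doob's $L^2$ maximal inequality is also too weak here, since it gives only a Chebyshev-type polynomial tail, not the exponential one needed to sum against $\sum_j e^{-\lambda_j^2/K_0}$.)
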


We first show how this implies Theorem~\ref{thm:mat-disc}. 

\begin{proofof}{Theorem ~\ref{thm:mat-disc} from Lemma~\ref{lem:main}}
The algorithm outputs the solution $y':=X_T$. By design the algorithm ensures that $X_T\in \sQ$, and thus 
$X_T \in  P(\sM)$ and it satisfies the error bounds \eqref{eq:Q-deg} on the side constraints.
  It remains to show that $\Omega(n)$ variables in $X_T$ must be integer valued  whenever $|\sC^{var}_T|\ge \frac{n}{20}$. For each $k\in [\ell]$ define 
$u_k := |\{i\in U_k : X_T(i)=\alpha \cdot 2^{-k}\}|$
and $v_k := |\{i\in V_k : X_T(i)=1-\alpha \cdot 2^{-k}\}|$. 
By the equality constraints \eqref{eq:Q-part1} for $U_k$, it follows that 
$$u_k\cdot \alpha \cdot 2^{-k} \le \sum_{i\in U_k} X_T(i)  = X_T(U_k) = y(U_k) \le |U_k|\cdot 2^{-k}.$$
This gives that $u_k\le \frac{1}{\alpha}|U_k|$. Similarly, $v_k\le \frac{1}{\alpha}|V_k|$. This implies that  
 $\sum_{k=1}^\ell (u_k+v_k)\le n/\alpha$. As the tight variables in 
 $\sC^{var}_t$ have values either $0$ or $1$ or $\alpha \cdot 2^{-k}$ or $1-\alpha\cdot 2^{-k}$, it follows that the number of $\{0,1\}$ variables is at least $$|\sC^{var}_t| - \sum_{k=1}^{\ell} (u_k+v_k) \geq \left(|\sC^{var}_t| - \frac{n}{\alpha} \right) \ge \left(\frac{1}{20}-\frac{1}{\alpha}\right)n$$ which is at least $n/40$ by choosing $\alpha = 40$. 
\end{proofof}

\medskip

In the rest of this section we prove Lemma~\ref{lem:main}.

\begin{cl} \label{cl:mat-tight-rank}
Given any $x\in P(\sM)$ with $\mathbf{0} <x<\mathbf{1}$, the maximum number of tight linearly independent rank constraints is  $n/2$. \end{cl}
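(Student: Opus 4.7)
\noindent\textbf{Proof plan for Claim~\ref{cl:mat-tight-rank}.}

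My plan is to reduce an arbitrary family of linearly independent tight rank constraints to a chain family by the standard uncrossing argument, and then use the strict fractionality assumption $\mathbf{0}<x<\mathbf{1}$ to show that consecutive chain sets must differ in at least two elements. The bound $n/2$ then falls out immediately.

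First, I would recall that the family $\mathcal{F} := \{S \subseteq V : x(S) = r(S)\}$ of tight sets is closed under union and intersection. This is the standard uncrossing argument: for any $S,T \in \mathcal{F}$, submodularity gives $r(S)+r(T) \ge r(S\cup T)+r(S\cap T)$, while $x(S)+x(T)=x(S\cup T)+x(S\cap T)$ exactly; combined with $x(S\cup T)\le r(S\cup T)$ and $x(S\cap T)\le r(S\cap T)$, both inequalities must hold with equality. Consequently, the linear span of $\{\chi_S : S \in \mathcal{F}\}$ is equal to the linear span of $\{\chi_C : C \in \mathcal{C}\}$ for some chain $\mathcal{C} = \{S_1 \subsetneq S_2 \subsetneq \cdots \subsetneq S_k\}$ contained in $\mathcal{F}$. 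Since indicator vectors of a strict chain of non-empty distinct sets are automatically linearly independent, the maximum number of linearly independent tight rank constraints is exactly the maximum length $k$ of such a chain in $\mathcal{F}$.

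Next I would bound $k$. Setting $S_0 := \emptyset$, for each $i \in [k]$ we have
\[
x(S_i \setminus S_{i-1}) \;=\; x(S_i) - x(S_{i-1}) \;=\; r(S_i) - r(S_{i-1}) \;\in\; \mathbb{Z}_{\ge 0}.
\]
Now I invoke the strict fractionality hypothesis $0 < x_j < 1$ for every $j \in V$. If $|S_i \setminus S_{i-1}| = 1$, say $S_i\setminus S_{i-1} = \{j\}$, then $x(S_i\setminus S_{i-1}) = x_j \in (0,1)$, which cannot be an integer—a contradiction. Hence $|S_i \setminus S_{i-1}| \ge 2$ for every $i$.

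Summing over $i$, $|S_k| = \sum_{i=1}^k |S_i\setminus S_{i-1}| \ge 2k$, and since $|S_k| \le n$ we conclude $k \le n/2$, as required. The only nontrivial step is the uncrossing reduction to a chain, but this is a textbook matroid argument; once that reduction is in hand, the strict-fractionality observation completes the proof in one line.
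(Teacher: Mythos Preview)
Your proof is correct and follows essentially the same route as the paper's: the paper invokes (citing Schrijver) the existence of a chain of tight sets that spans all tight rank constraints, then observes that integrality of the right-hand sides together with $0<x_i<1$ forces $|\mathcal{C}|\le n/2$. You have simply unpacked that last sentence by making explicit that consecutive chain sets must differ in at least two elements; the argument is the same.
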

\begin{proof} 
Recall that a tight constraint in $P(\sM)$ is any subset $T\sse V$ with $\sum_{i\in T} x_i = r(T)$. 
The  claim follows from the known property (see eg.~\cite{Sch}) that for any $x\in P(\sM)$ there is a linearly independent collection ${\cal C}$ of tight constraints such that (i) ${\cal C}$ spans all tight constraints and (ii) ${\cal C}$ forms a chain family. Since all right-hand-sides are integer and each variable is strictly between $0$ and  $1$, it follows that $|{\cal C}|\le \frac{n}2$. 
\end{proof}

\begin{cl}\label{cl:num-truc-steps}
The truncation Step~\ref{step:walk-trunc} occurs at most $n$ times.
\end{cl}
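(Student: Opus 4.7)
The plan is to show that each truncation strictly increases the dimension of the span of tight constraint normals, and so the ambient dimension $n$ bounds the number of truncations.

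The main observation is that the random step direction $\oG_t$ is orthogonal not just to the constraints explicitly listed in $\sC^{var}_t \cup \sC^{side}_t \cup \sC^{part}_t \cup \sC^{rank}_t$, but to the normal of \emph{every} tight constraint of $\sQ$ at $X_t$. For variable, side and partition constraints this is immediate since those sets contain every tight constraint of the corresponding type. For rank constraints, $\sC^{rank}_t$ is by construction a maximal linearly independent subfamily of currently tight rank constraints, so any other tight rank constraint is a linear combination of members of $\sC^{rank}_t$ and is therefore orthogonal to $\oG_t \in \sV_t$ as well. (Here I use that $\oG_t = D^{-1} G_t$ lies in $\sV_t$ by the construction of $\sV'_t$.)

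From this, I would argue that if truncation happens at iteration $t$, then the new constraint $\langle a,x\rangle \le \beta$ (or the corresponding equality) that becomes tight at $X_{t+1}$ has normal $a$ which is \emph{not} in the linear span of the normals of the constraints tight at $X_t$. Indeed, if $a$ were in that span, then by the previous paragraph $\langle a, \oG_t\rangle = 0$, so moving along $\oG_t$ from $X_t$ would leave $\langle a,\cdot\rangle$ unchanged and the constraint could not go from slack at $X_t$ to tight/violated at $Y_t$.

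Now let $R_t$ denote the dimension of the linear span of the coefficient vectors of all constraints of $\sQ$ tight at $X_t$. The previous paragraph shows that whenever truncation happens we have $R_{t+1} \ge R_t + 1$, whereas during non-truncation steps we only have $R_{t+1} \ge R_t$ (tight constraints can only accumulate since $\oG_t$ is orthogonal to all of them and no currently tight constraint is released). Since $R_t \le n$ for all $t$ (the ambient space is $\R^n$) and $R_0 \ge 0$, the number of truncation iterations is at most $n$. The only subtlety to double-check is that the new constraint's normal is really added to the tight set for the next iteration: if it is a variable or side constraint this is stated explicitly in the algorithm description, and if it is a rank constraint, then by linear independence from the current $\sC^{rank}_t$ it will be included when $\sC^{rank}_{t+1}$ is recomputed as a maximal linearly independent set of tight rank constraints. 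No step is expected to be a real obstacle; the only thing to be careful about is the maximality argument for $\sC^{rank}_t$, which is what lets us conclude orthogonality to \emph{all} tight rank constraints and not merely to those in our chosen basis.
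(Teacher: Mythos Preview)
Your proof is correct and follows essentially the same argument as the paper: track the span $\sE_t$ of all tight constraint normals (your $R_t$ is $\dim \sE_t$), observe that the step direction $Y_t-X_t$ lies in $\sV_t=\sE_t^\perp$, and conclude that each truncation adds a normal outside $\sE_t$ so $\dim\sE_t$ strictly increases. You are even slightly more explicit than the paper in justifying why $\sV_t$ is orthogonal to \emph{every} tight constraint (via the maximality of $\sC^{rank}_t$), a point the paper simply asserts.
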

\begin{proof} We will show that whenever Step~\ref{step:walk-trunc} occurs (i.e. the random move gets truncated) the dimension $dim(\sV_{t+1})$ decreases by at least $1$, i.e.~$dim(\sV_{t+1})\le dim(\sV_t)-1$.
 As the maximum dimension is $n$ this would imply the claim.

Let $\sE_t$ denote the subspace spanned by all the tight constraints of $X_t\in \sQ$;
Recall that $\sV_t = \sE_t^{\perp}$ is the subspace orthogonal to $\sE_t$, and thus $dim(\sE_t)=n-dim(\sV_t)$. We also have $\sE_0\sse \sE_1\sse \cdots \sE_T$. Suppose that Step~\ref{step:walk-trunc} occurs in iteration $t$. Then we have $X_t\in \sQ$, $Y_t \not\in \sQ$ and $Y_t-X_t\in \sV_t$. Moreover  $X_{t+1} = X_t+\epsilon (Y_t-X_t) \in \sQ$ where $\epsilon \in [0,1)$ is such that $X_t+\epsilon' (Y_t-X_t) \not\in \sQ$ for all $\epsilon'>\epsilon$. 
So there is some constraint $\langle a', x\rangle \le \beta$ in $\sQ$ with:
$$\langle a', X_t\rangle \le \beta,\quad \langle a', X_{t+1}\rangle = \beta \quad \mbox{and} \quad \langle a', Y_t\rangle > \beta.$$

Since this constraint satisfies $\langle a', Y_t-X_t\rangle  > 0$ and $Y_t-X_t \in \sV_t$, we have $a'\not\in \sE_t$. As $a'$ is added to $\sE_{t+1}$, we have $dim(\sE_{t+1})\ge 1+dim(\sE_t)$. This proves the desired property and the claim.
\end{proof}

The statements of the following two lemmas are similar to those in \cite{LM12}, but the proofs require additional work since our random walk is different.  The first lemma shows that the expected number of tight side constraints at the end of the algorithm is not too high, and the second lemma shows that the expected number of tight variable constraints is large.

\begin{lemma}\label{lem:exp-deg-rank}
$\E[|\sC^{side}_T|]<\frac{n}4$.
\end{lemma}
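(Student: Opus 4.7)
\noindent\textbf{Proof plan for Lemma~\ref{lem:exp-deg-rank}.} My approach is to bound the probability that each individual side constraint becomes tight via a Rademacher-martingale concentration inequality, and then apply linearity of expectation. For each $j \in [m]$, define
$$M_t^{(j)} \;:=\; \langle X_t - y,\, a_j\rangle,$$
so that $j \in \sC^{side}_T$ iff $\max_{t\le T} |M_t^{(j)}|$ ever reaches the boundary $\lambda_j \sqrt{W_j(s)} + \|a_j\|/n^3$. It suffices to show $\sum_j \Pr[j \in \sC^{side}_T] < n/4$.

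The key per-step calculation is the variance of $M^{(j)}$. In an un-truncated iteration,
$$\Delta M_t^{(j)} \;=\; \gamma\, \langle \oG_t, a_j\rangle \;=\; \gamma\, \langle G_t, D^{-1} a_j\rangle \;=\; \gamma \sum_{h=1}^k g_h\, \langle b_h, D^{-1} a_j\rangle,$$
where $\{b_h\}$ is the orthonormal basis of $\sV'_t$ (treated as exact per the remark after Step~\ref{step:orthn}) and the $g_h \in \{\pm 1\}$ are independent Rademacher. Because $\{b_h\}$ is orthonormal,
$$\sum_h \langle b_h, D^{-1} a_j\rangle^2 \;\le\; \|D^{-1} a_j\|^2 \;=\; \sum_i s_i^2 a_{ji}^2 \;=\; W_j(s),$$
so conditional on the past, $\Delta M_t^{(j)}$ is mean-zero and sub-Gaussian with parameter $\gamma^2 W_j(s)$. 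Crucially, once $j$ enters $\sC^{side}_t$ the vector $a_j$ is incorporated into $\sE_t$ (through the definition of $\sV_t$), so every subsequent $\oG_s$ is orthogonal to $a_j$ and $M_s^{(j)}$ is pinned at its boundary. Summing over the $T = K/\gamma^2$ iterations (with $K = 10\alpha^2$), the cumulative sub-Gaussian parameter is at most $K\cdot W_j(s)$. Doob's maximal inequality applied to the exponential martingales $e^{\pm \theta M_t^{(j)}}$ (stopped upon reaching the boundary) then yields
$$\Pr[j \in \sC^{side}_T] \;\le\; 2\exp\!\left(-\frac{\lambda_j^2 W_j(s)}{2K\,W_j(s)}\right) \;=\; 2\, e^{-\lambda_j^2/(2K)}.$$
Choosing the constant $K_0$ in Theorem~\ref{thm:main} to satisfy $K_0 \ge 2K$, the hypothesis $\sum_j e^{-\lambda_j^2/K_0} < n/16$ gives
$$\E[|\sC^{side}_T|] \;\le\; \sum_{j=1}^m \Pr[j \in \sC^{side}_T] \;\le\; 2\sum_j e^{-\lambda_j^2/K_0} \;<\; \tfrac{n}{8} \;<\; \tfrac{n}{4}.$$

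The main obstacle I expect is handling the truncation in Step~\ref{step:walk-trunc}: the scaled increment $\epsilon_t\gamma \oG_t$ has the scalar $\epsilon_t \in [0,1)$ depending on $\oG_t$ in an asymmetric way, destroying the strict martingale property of $M_t^{(j)}$ in truncated steps. To control this I plan to invoke Claim~\ref{cl:num-truc-steps}, which guarantees at most $n$ truncations in the whole run. At every truncated step, $|\Delta M_t^{(j)}| \le \gamma|\langle \oG_t, a_j\rangle|$ is still dominated by its un-truncated magnitude, so the cumulative truncation-induced drift is bounded (in magnitude) by $n\gamma\sqrt{W_j(s)} = O(n^{-5})\sqrt{W_j(s)}$, which is absorbed by the $\|a_j\|/n^3$ slack built into the boundary. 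The errors from the ``almost orthonormal'' basis of Fact~\ref{fact:gauss-elim} are of the same negligible order. A clean formal realization is either to stop the martingale at the first truncation time that interacts with $a_j$ and add the bias back by hand, or to couple $X_t$ with the un-truncated process $\tilde X_{t+1}=\tilde X_t+\gamma \oG_t$ (a genuine martingale in $\langle\cdot, a_j\rangle$) and show the two processes agree in $\langle \cdot, a_j\rangle$ up to the negligible error above.
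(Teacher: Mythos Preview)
Your proposal is correct and follows essentially the same route as the paper: bound the per-step conditional variance of $\langle \oG_t,a_j\rangle$ by $W_j(s)$ via the orthonormality of $\{b_h\}$, apply a martingale tail bound over the $T=K/\gamma^2$ steps to get $\Pr[j\in\sC^{side}_T]\le 2e^{-\lambda_j^2/\Theta(K)}$, and sum using the hypothesis $\sum_j e^{-\lambda_j^2/K_0}<n/16$. The truncation is handled identically, by separating out at most $n$ corrective terms (Claim~\ref{cl:num-truc-steps}) and absorbing their total contribution into the $\|a_j\|/n^3$ slack.

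The one cosmetic difference is the concentration tool. The paper applies Freedman's inequality (Theorem~\ref{thm:freedman}) to the sequence $Z_t=\langle \oG_t,a_j\rangle$, which requires an almost-sure bound $|Z_t|\le n\|a_j\|$ and then a separate argument (using $W_j(s)\ge \|a_j\|^2/n^6$ and $\gamma=n^{-6}$) to kill the $M\ell$ term in Freedman's denominator. Your sub-Gaussian route via the exponential martingale and Doob is slightly cleaner here, since Rademacher sums are exactly sub-Gaussian with parameter $\sum_h\langle b_h,D^{-1}a_j\rangle^2\le W_j(s)$ and no $M\ell$ cross-term appears. One small slip: the pointwise bound is $|\langle \oG_t,a_j\rangle|\le \sqrt{n}\,\sqrt{W_j(s)}$ (or $\le n\|a_j\|$, as the paper uses), not $\sqrt{W_j(s)}$, so the cumulative truncation drift is $O(n^{3/2}\gamma)\sqrt{W_j(s)}$ rather than $n\gamma\sqrt{W_j(s)}$; with $\gamma=n^{-6}$ this is still comfortably below the $\|a_j\|/n^3$ slack, so nothing changes.
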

\begin{proof}
Note that $X_T-y = \gamma\sum_{t=0}^T \oG_t + \sum_{q=1}^n \Delta_{t(q)}$ where $\Delta$s correspond to the truncation incurred during the iterations $t=t(1),\cdots,t(n)$ for which Step~\ref{step:walk-trunc} applies (by Claim~\ref{cl:num-truc-steps} there are at most $n$ such iterations). Moreover for each $q$, $\Delta_{t(q)} = \delta\cdot \oG_{t(q)}$ for some $\delta$ with $0<|\delta|< \gamma$.

If $j\in \sC_T^{side}$, then $|\langle X_T-y, a_j\rangle|=\lambda_j \sqrt{W_j(s)}+\frac1{n^3}\cdot \|a_j\|$. We have 
{\small $$
|\langle X_T-y, a_j\rangle| 
  \leq   |\gamma\sum_{t=0}^T \langle \oG_t, a_j\rangle| + \sum_{q=1}^n \gamma|\langle \oG_{a(q)}, a_j\rangle|  
  \leq |\gamma\sum_{t=0}^T \langle \oG_t, a_j\rangle| + n \gamma\cdot \max_{t=0}^T|\langle \oG_{t}, a_j\rangle|.
$$}
Note that at any iteration $t$,
$$|\langle \oG_{t}, a_j\rangle| = |\langle D^{-1}G_{t}, a_j\rangle|\le |\langle  G_{t}, a_j\rangle| \le \sum_{h=1}^k |\langle  b_h, a_j\rangle|\le n\|a_j\|.$$
The first inequality above uses that $D^{-1}$ is a diagonal matrix with entries at most one, the second inequality is by definition of $G_t$ where $\{b_h\}$ is an orthonormal basis of $\sV'_t$, and the last inequality uses that each $b_h$ is a unit vector. As $\gamma=n^{-6}$, we have $n \gamma\cdot \max_{t=0}^T|\langle \oG_{t}, a_j\rangle|\le \|a_j\|/n^4$. So it follows that if $j \in \sC_T^{side}$, then we must have:
$$|\gamma\sum_{t=0}^T \langle \oG_t, a_j\rangle| \,\, \ge \,\, \lambda_j \sqrt{W_j(s)}.$$

In order to bound the probability of this event, we consider the sequence $\{Z_t\}$ where $Z_t=\langle \oG_t, a_j\rangle$, and  note the following useful facts.  
\begin{observation}\label{obs:martingale}
The sequence $\{ Z_t\}$ forms a martingale satisfying:
\begin{enumerate}
\item $\E \left[ Z_{t} \mid Z_{t-1}, \ldots, Z_0 \right] = 0$ for all $t$. 
\item $|Z_t|\le n \|a_j\|$ whp for all $t$.
\item $\E \left[ Z_{t}^2 \mid Z_{t-1}, \ldots, Z_0 \right] \le \sum_{i=1}^n s_i^2\cdot a_{ji}^2=W_j(s)$ for all $t$. 
\end{enumerate}
\end{observation}
\begin{proof}
As $\oG_t=\sum_{h=1}^k g_h\cdot b_h$ where each $\E[g_h]=0$, we have $\E[\oG_t|\oG_0,\cdots,\oG_{t-1}]={\bf 0}$. Note that $\oG_t$ is not independent of  $\oG_0,\cdots,\oG_{t-1}$, as these choices determine the subspace where $\oG_t$ lies. So $\{Z_t\}$ forms a martingale sequence with the first property.

For the remaining two properties, we fix $j\in [m]$ and $t$ and condition on $Z_0,\cdots,Z_{t-1}$. To reduce notation we drop all subscripts: so $a=a_j$, $G=G_t$, $\sV'=\sV'_t$ and $Z=Z_t$.

Let $\{b_r\}$ denote an orthonormal basis for the linear subspace $\sV'$. Then $G=\sum_r g_r\cdot b_r$ where each $g_r$ is iid $\pm 1$ with probability half. 
As $\oG=D^{-1}G$, we have $Z=\langle \oG,a\rangle = \sum_r \langle D^{-1} b_r, a\rangle \, g_r = \sum_r \langle D^{-1}a,b_r\rangle \, g_r$. So, we can bound 
$$|Z| \,\, \le  \,\, \sum_r |\langle D^{-1}a, b_r\rangle| \cdot |g_r|  \,\, \le    \,\, \|D^{-1}a\| \sum_r |g_r|  \,\, \le  \,\, n\|a\|.  $$ 
The first inequality follows from the triangle inequality, the second by Cauchy-Schwartz and as $b_r$ is a unit-vector, and the third follows as $D^{-1}$ is a diagonal matrix with entries at most one. 
This proves property 2.

Finally, $\E[Z^2] = \sum_r \langle D^{-1}a,b_r\rangle^2 \, \E[g_r^2] = \sum_r \langle D^{-1}a,b_r\rangle^2 \le \|D^{-1}a\|^2$, where the last step follows as  $\{b_r\}$ is an orthonormal basis for a subspace of $\mathbb{R}^n$. This proves property 3. 
\end{proof}

Using a martingale  concentration inequality, we obtain:
\begin{cl}\label{cl:constr-violn-prob}
$\Pr\left[ |\gamma\sum_{t=0}^T \langle \oG_t, a_j\rangle| \,  \ge  \, \lambda_j \sqrt{W_j(s)} \right]  = \Pr\left[ | \sum_{t=0}^T Z_t| \,\ge \,  \frac{\lambda_j}{\gamma} \sqrt{W_j(s)} \right] \le 2\cdot \exp(-\lambda_j^2/3K)$.
\end{cl}
\begin{proof}  The first equality is by definition of the $Z_t$s. 
We now use the following concentration inequality:
  \begin{theorem}[Freedman~\cite{Freedman} (Theorem 1.6)]\label{thm:freedman}
    Consider a real-valued martingale sequence $\{Z_t\}_{t \geq 0}$ such
    that $Z_0 = 0$, $\E \left[ Z_{t} \mid Z_{t-1}, \ldots,
      Z_0 \right] = 0$ for all $t$, and $| Z_t | \leq M$ almost surely for all
    $t$. 
    Let $W_t = \sum_{j=0}^{t} \E \left[ Z^2_{j} \, \mid \, Z_{j-1}, Z_{j-2}, \ldots Z_0 \right]$
    for all $t \geq 1$. Then for all $\ell \geq 0$ and $\sigma^2 > 0$, and
    any stopping time $\tau$ we have
    \[
    \Pr \left[|\sum_{j=0}^\tau Z_{j}| \geq \ell \, \, and \, \, W_{\tau} \leq \sigma^2
    \right] \quad\leq \quad 2 \exp\left(-\frac{\ell^2/2}{\sigma^2 + M\ell/3} \right)\]
  \end{theorem}
   
We apply this with $M=n \|a_j\|$, $\ell=\frac{\lambda_j}{\gamma}\sqrt{W_j(s)}$, $\sigma^2=T\cdot W_j(s)$ and $\tau=T$. Note that 
$$\frac{\ell^2}{2\sigma^2+\frac23 M\ell} \quad = \quad \frac{\lambda_j^2 }{2\gamma^2T  + \frac23 \gamma n \|a_j\|\lambda_j/\sqrt{W_j(s)}}\quad \ge \quad \frac{\lambda_j^2 }{2\gamma^2T  + 1},$$
where the last inequality uses $W_j(s) \ge \|a_j\|^2/n^6$, $\lambda_j\le n$  and $\gamma = n^{-6}$. Thus
$$ \Pr\left[ |\gamma\sum_{t=0}^T \langle \oG_t, a_j\rangle| \,\, \ge \,\, \lambda_j \sqrt{W_j(s)} \right] \le 2\exp\left(\frac{-\lambda_j^2}{2\gamma^2T+1}\right)\le 2\cdot \exp(-\lambda_j^2/3K).$$
The last inequality uses $T=K/\gamma^2$ and $K\ge 1$. This completes the proof of the claim.
\end{proof}

By the above claim,   we have 
 $\E[|\sC^{side}_T|]< 2 \sum_{j=1}^m \exp(-\lambda_j^2/(30\alpha^2)) < 0.25n$. This completes the proof of Lemma~\ref{lem:exp-deg-rank}.
\end{proof}

\medskip

We now prove that in expectation, at least $0.1n$ variables become tight at the end of the algorithm. This immediately implies Lemma~\ref{lem:main}.
\begin{lemma} \label{lem:var-tight}
$\E[|\sC^{var}_T|]\ge 0.1n$.
\end{lemma}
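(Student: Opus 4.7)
My plan is to introduce the scaled quadratic potential
\[
\Phi_t \,:=\, \|D(X_t - y)\|^2 \,=\, \sum_{i=1}^n \frac{(X_t(i) - y(i))^2}{s_i^2},
\]
sandwich $\E[\Phi_T]$ between a deterministic upper bound and a lower bound driven by the dimension $N_t := \dim(\sV_t)$, and then convert this into the desired bound on $\E[|\sC^{var}_T|]$. The box constraints \eqref{eq:Q-var1}--\eqref{eq:Q-var2} immediately give $|X_t(i) - y(i)| \le \alpha s_i$ for every $i$ and every $t$ (since $y(i) \le s_i$ and $X_t(i) \le \min(\alpha s_i,1)$ when $i \in U_k$, and symmetrically for $V_k$), so $\Phi_t \le \alpha^2 n$ deterministically throughout the algorithm.

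For the lower bound, I will write the step as $X_{t+1} - X_t = \mu_t \gamma \oG_t$ with $\mu_t \in [0,1]$ (equal to $1$ unless Step~\ref{step:walk-trunc} fires) and expand
\[
\Phi_{t+1} - \Phi_t \,=\, 2\mu_t \gamma\,\langle D(X_t-y), G_t\rangle \,+\, \mu_t^2\gamma^2 \|G_t\|^2.
\]
Because $\{b_h\}$ is orthonormal and $G_t = \sum_h g_h b_h$ with i.i.d.\ $\pm 1$ signs, $\|G_t\|^2 = N_t$ is determined by the history. In iterations without truncation, $\mu_t \equiv 1$, the linear term has zero conditional mean, and the conditional increment is exactly $\gamma^2 N_t$. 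Tight constraints only accumulate over time, so $N_t$ is non-increasing, and a truncation-free run would give total expected growth at least $\gamma^2 T\cdot \E[N_T] = K\cdot \E[N_T]$ with $T = K/\gamma^2$.

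The main obstacle will be handling truncated iterations cleanly, since there $\mu_t$ is correlated with $G_t$: the linear term is no longer mean-zero and the quadratic term is dampened. The key leverage is Claim~\ref{cl:num-truc-steps}, which caps the number of truncations at $n$. Using the deterministic bounds $\|D(X_t-y)\| \le \alpha\sqrt{n}$ (from the box bound) and $\|G_t\| = \sqrt{N_t} \le \sqrt{n}$, Cauchy--Schwarz gives $|\langle D(X_t-y),G_t\rangle|\le \alpha n$, so each truncated step loses at most $O(\alpha\gamma n)$ in the linear term and $O(\gamma^2 n)$ in the quadratic term. Summing over at most $n$ truncation steps, the total slack is $O(\alpha\gamma n^2 + \gamma^2 n^2) = O(1/n^4)$ for $\gamma = n^{-6}$. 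Therefore
\[
\E[\Phi_T] \,\ge\, \gamma^2 T\cdot \E[N_T] - o(1) \,=\, K\cdot \E[N_T] - o(1).
\]

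Combining with the upper bound $\E[\Phi_T] \le \alpha^2 n$ and using $K = 10\alpha^2$ gives $\E[N_T] \le n/10 + o(1)$. Writing
\[
N_T \,=\, n - |\sC^{var}_T| - |\sC^{side}_T| - |\sC^{rank}_T| - |\sC^{part}_T|
\]
and plugging in the deterministic bounds $|\sC^{rank}_T| \le n/2$ (Claim~\ref{cl:mat-tight-rank}) and $|\sC^{part}_T| = 2\ell = O(\log n)$, together with Lemma~\ref{lem:exp-deg-rank}'s bound $\E[|\sC^{side}_T|] < n/4$, I conclude
\[
\E[|\sC^{var}_T|] \,\ge\, \tfrac{9n}{10} - \tfrac{n}{2} - \tfrac{n}{4} - 2\ell - o(1) \,=\, 0.15\,n - O(\log n) \,\ge\, 0.1\,n
\]
for all sufficiently large $n$, which is the desired lemma.
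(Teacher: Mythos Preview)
Your proof is correct and follows essentially the same approach as the paper: a scaled quadratic potential bounded above by $\alpha^2 n$ via the box constraints, with expected per-step increment $\gamma^2\dim(\sV_t)$, negligible total truncation loss (at most $n$ truncations times $O(\gamma n)$ each), and the final dimension accounting via Claim~\ref{cl:mat-tight-rank} and Lemma~\ref{lem:exp-deg-rank}. The only cosmetic difference is that the paper centers its potential at the target boundary values ($0$ for $U_k$, $1$ for $V_k$) rather than at the starting point $y$; since both potentials share the same Hessian $D^2$, the increment analysis, the $\alpha^2 n$ upper bound, and the resulting $0.15n - O(\log n)$ estimate are identical.
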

\begin{proof}
Define the following potential function, which will measure the progress of the algorithm toward the variables becoming tight. 
$$\Phi(x) \quad :=\quad \sum_{k=1}^\ell 2^{2k}\cdot \left( \sum_{i\in U_k} x(i)^2 + \sum_{i\in V_k} (1-x(i)) ^2\right),\qquad \forall x\in \sQ.$$
Note that since $X_T\in \sQ$, we have $X_T(i)\le \alpha\cdot 2^{-k}$ for $i\in U_k$ and $1-X_T(i)\le \alpha\cdot 2^{-k}$ for $i\in V_k$. So $\Phi(X_T)\le \alpha^2\cdot n$.  
We also define the ``incremental function'' for any $x\in \sQ$ and $g\in \R^n$, $f(x,g) :=   \Phi(x+\gamma D^{-1} g)-\Phi(x)$. Recall that $D^{-1}$ is the $n\times n$ diagonal matrix with entries $(s_1,\cdots,s_n)$ where $s_i = 2^{-k}$ for $i\in U_k\cup V_k$. So
\begin{eqnarray*}
f(x,g) &= &\gamma^2 \sum_{i=1}^n g(i)^2 + 2\sum_{k=1}^\ell 2^{2k}\cdot \left( \sum_{i\in U_k} x(i)\gamma s_i\cdot g(i) - \sum_{i\in V_k} (1-x(i))\gamma s_i\cdot g(i)\right) \notag \\ 
&= &\gamma^2 \sum_{i=1}^n g(i)^2 + 2\gamma\sum_{k=1}^\ell  \left( \sum_{i\in U_k} \frac{x(i) g(i)}{s_i} - \sum_{i\in V_k} \frac{(1-x(i))g(i)}{ s_i}\right)
\end{eqnarray*}

Suppose the algorithm was modified to never have the truncation step~\ref{step:walk-trunc}, then in any iteration $t$, the increase $\Phi(Y_t)-\Phi(X_t)=f(X_t,G_t)$ where $G_t$ is the random direction chosen in $\sV'_t$. 
The following is by simple calculation.

{\small \begin{eqnarray}
f(X_t,G_t) - f(X_t,\delta G_t) & = & \gamma^2 (1-\delta^2) \|G_t\|_2^2 + 2\gamma(1-\delta) \sum_{k=1}^\ell \left( \sum_{i\in U_k} \frac{X_t(i)}{s_i}\cdot G_t(i) - \sum_{i\in V_k} \frac{1-X_t(i)}{s_i}\cdot G_t(i)\right) \notag \\
&\le & \gamma^2 (1-\delta^2) \|G_t\|_2^2 + 2\alpha\gamma(1-\delta) \sum_{i=1}^n |G_t(i)| \quad \le \quad \gamma^2\|G_t\|_2^2 + 2\gamma\alpha \|G_t\|_1 \notag\\
&\le & \gamma^2 n + 2\gamma\alpha n^{3/2} \quad \le \quad \frac1n \label{eq:potn-trunc}
\end{eqnarray}}
The first inequality in \eqref{eq:potn-trunc} uses the fact that $G_t$ is the sum of orthogonal unit vectors, and the second inequality uses $\gamma=n^{-6}$ and $\alpha=O(1)$.

This implies that  
\begin{eqnarray}   
\Phi(X_T)-\Phi(X_0) & = & \sum_{t=0}^T f(X_t,\delta_t G_t)   \,\ge\, \sum_{t=0}^T f(X_t, G_t) - \frac1n \sum_{t=0}^T \mathbf{1}[\mbox{step~\ref{step:walk-trunc} occurs in iteration }t] \nonumber \\
& \geq & \sum_{t=0}^T f(X_t, G_t) - 1 \label{eq:potn-tot} \qquad  \mbox{(by Claim~\ref{cl:num-truc-steps})}
\end{eqnarray}

\begin{cl}\label{cl:potn-inc}
$\E[\Phi(X_T)] -\Phi(y)  \ge \gamma^2 T\cdot \E[\dim(\sV_T)]-1$.
\end{cl}
\begin{proof} 
From~\eqref{eq:potn-tot} we have:
\begin{equation}\label{eq:exp-potn-inc}
\E[\Phi(X_T)]-\Phi(X_0) \ge \sum_{t=0}^T \E[f(X_t, G_t)] -1.
\end{equation}
In any iteration $t$, as $G_t=\sum_{h=1}^k g_h b_h$ where $\{b_h\}$ is an orthonormal basis for $\sV'_t$ and $g_h=\pm 1$,  
$$\E[f(X_t, G_t)] =  \gamma^2 \sum_{i=1}^n \E[G_t(i)^2] =  \gamma^2 \sum_{h=1}^k \|b_h\|^2 = \gamma^2 k = \gamma^2 \E[dim(\sV'_t)] = \gamma^2 \E[dim(\sV_t)].$$
Moreover, because $\sV_0\supseteq \sV_1 \supseteq \cdots \sV_T$, we have $\E[dim(\sV_t)]\ge  \E[dim(\sV_T)]$. So  
\begin{equation}\label{eq:iter-potn-inc}
\sum_{t=0}^T \E[f(X_t, G_t)]\quad \ge \quad \gamma^2 T\cdot \E[dim(\sV_T)].
\end{equation}

Combining \eqref{eq:exp-potn-inc} and \eqref{eq:iter-potn-inc},  we complete the proof of Claim~\ref{cl:potn-inc}. 
\end{proof}

By Claim~\ref{cl:mat-tight-rank} and the fact that $|\sC^{part}_T|=2\ell$, we have
\begin{eqnarray*}
\dim(\sV_T) & \ge &  n - \dim(\sC^{var}_T) - \dim(\sC^{side}_T) - \dim(\sC^{rank}_T)- \dim(\sC^{part}_T)\\
         & \ge &  \frac{n}{2}- 2\ell- \dim(\sC^{var}_T) - \dim(\sC^{side}_T) 
 \end{eqnarray*}
Taking expectations and  by Claim~\ref{lem:exp-deg-rank}, this gives
\begin{equation}
\label{eq:exp-dimvt}
 \E[\dim(\sV_T)] \ge  \frac{n}{4}- 2\ell- \dim(\sC^{var}_T) 
\end{equation}
Using $\Phi(X_T)\le \alpha^2 n$ and Claim~\ref{cl:potn-inc}, we obtain:
$$\alpha^2 n\ge \E[\Phi_T] \ge \gamma^2 T \cdot \left( \frac{n}{4}- 2\ell  - \E[\dim(\sC^{var}_T)] \right) - 1. $$
Rearranging and using $T= K/\gamma^2$, $K=10\alpha^2$ and $\ell=\log n$ gives that 
$$\E[\dim(\sC^{var}_T)]\ge \frac{n}{4}-\frac{\alpha^2 n}{K}-2\ell-\frac{1}{K} \ge 0.1n,$$
where we used $K=10\alpha^2$, $\alpha=40$ and  $\ell=O(\log n)$. This completes the  proof of  Lemma~\ref{lem:var-tight}.
\end{proof}


\section{Applications}

\subsection{Linear System Rounding with Violations}
Consider a $0-1$ integer program on $n$ variables where each constraint $j\in [m]$ corresponds to some subset $S_j\sse [n]$ of the variables having total value $b_j\in \mathbb{Z}_+$. That is,
$$P \quad = \quad \left\{x\in \{0,1\}^n \quad :\quad \sum_{i\in S_j} x_i = b_j,\,\, \forall j\in [m]\right\}.$$
\begin{theorem}\label{thm:IP}
There is a randomized polynomial time algorithm  that given any fractional solution 
satisfying the constraints in $P$, 
finds an integer solution $x\in \{0,1\}^n$ where for each $j\in[m]$, 
$$|x(S_j)-b_j|\quad \le \quad O(1)\cdot \min\left\{ \sqrt{j},\,\, \sqrt{n \log(m/n)}, \,\, \sqrt{\log m \log n \cdot b_j} + \log m \log n,\,\, \sqrt{\Delta}\log n \right\}.$$
Above $\Delta=\max_{i=1}^n |\{j\in [m] : i\in S_j\}|$ is the maximum number of constraints that any variable appears in. 
\end{theorem}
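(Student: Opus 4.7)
The plan is to derive the theorem from iterated applications of Theorem~\ref{thm:main} (which is the free-matroid specialization of Theorem~\ref{thm:mat-disc}). Starting from the given fractional solution $y^{(0)}=y$, in round $t$ we apply Theorem~\ref{thm:main} to $y^{(t)}$ with a carefully chosen profile $(\lambda_j^{(t)})_{j\in[m]}$, producing a new point $y^{(t+1)}$ with $\Omega(n_t)$ freshly integral coordinates; here $n_t$ denotes the number of fractional coordinates of $y^{(t)}$. Since $n_t$ shrinks by a constant factor per round, after $T=O(\log n)$ rounds $y^{(T)}$ is fully integral and we set $x=y^{(T)}$. The total violation on constraint $j$ is the telescoping sum $\sum_{t=0}^{T-1}\langle a_j,\, y^{(t+1)}-y^{(t)}\rangle$, and our goal is to bound it by each of the four terms in the claimed minimum.

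Each term in the minimum corresponds to a natural choice of $\lambda_j^{(t)}$. For the $\sqrt{n\log(m/n)}$ bound we take $\lambda_j^{(t)}=\Theta(\sqrt{\log(m/n_t)})$ uniformly; combined with $W_j(y^{(t)})\le n_t/4$, the per-round violation is $O(\sqrt{n_t\log(m/n_t)})$, and the geometric decay of $n_t$ makes the first round dominate. For $\sqrt{j}$ we sort the constraints and use a chaining-style profile of the form $\lambda_j^{(t)}=\Theta(\sqrt{\log(j/n_t)})$ when $j\gtrsim n_t$ and $\lambda_j^{(t)}=0$ otherwise; one verifies that this profile meets the LM condition and that each constraint $j$ receives its dominant contribution from the round in which $n_t\asymp j$, yielding violation $O(\sqrt{j})$. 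For $\sqrt{\Delta}\log n$ we rely on a column-sparse refinement of LM-rounding in each round (taking $\lambda_j^{(t)}=\Theta(\sqrt{\log n})$) and sum the per-round bound $O(\sqrt{\Delta\log n})$ over the $O(\log n)$ rounds. To obtain all four bounds simultaneously in a single solution $x$, we set $\lambda_j^{(t)}$ in each round as the coordinate-wise minimum of the values prescribed above, and verify that this composite profile still satisfies the LM condition $\sum_j e^{-(\lambda_j^{(t)})^2/K_0}<n_t/16$.

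The main obstacle is the multiplicative Chernoff-type bound $\sqrt{\log m\log n\cdot b_j}+\log m\log n$. Here we use $\lambda_j^{(t)}=\Theta(\sqrt{\log m})$ and exploit that, since $a_{ji}\in\{0,1\}$, $W_j(y^{(t)})\le \sum_i a_{ji}y^{(t)}_i$, which an inductive argument shows stays at most $2 b_j$ throughout, provided the accumulated violation remains within the target. The naive approach---invoking the per-round guarantee of Theorem~\ref{thm:main} and union-bounding over the $T$ rounds---costs a full extra $\log n$ factor and only yields $\log n\cdot \sqrt{\log m\cdot b_j}$. To save a $\sqrt{\log n}$ factor, we instead view the concatenation of the internal random walks from all $T$ rounds as a single martingale (their randomness is independent across rounds) and apply a Freedman-type bound in the spirit of Claim~\ref{cl:constr-violn-prob} \emph{once} to the entire walk. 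The total conditional variance of $\langle a_j,\cdot\rangle$ along this long walk is $\sum_{\text{steps}}\gamma^2 W_j\le O(\log n\cdot b_j)$, so Freedman yields a deviation of $O(\sqrt{\log m\log n\cdot b_j}+\log m\log n)$ for a fixed $j$ with probability $\ge 1-1/m^2$; a union bound over $j\in[m]$ completes the proof.
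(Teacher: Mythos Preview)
Your overall plan---iterate Theorem~\ref{thm:main}, choose a separate $\lambda_j$ profile for each of the four target bounds, and for the multiplicative bound analyze the concatenated walk with a single Freedman application---is exactly the paper's approach. Your handling of the $\sqrt{j}$ and $\sqrt{n\log(m/n)}$ bounds (chaining profile $\lambda_j\asymp\sqrt{\log(j/n_t)}$, zero below the threshold) and of the Chernoff bound (Freedman across all rounds, union bound over~$m$) match the paper almost verbatim. One cosmetic difference: the paper \emph{partitions} $[m]$ into $M_1,\dots,M_4$ according to which term attains the minimum and uses only the corresponding profile on each part, rather than taking the pointwise minimum of four profiles; and for $M_3$ it simply sets $\lambda_j=\infty$ (so those constraints never enter $\sC^{side}_t$) instead of $\Theta(\sqrt{\log m})$. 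Both variants work, but the partition makes the LM-condition bookkeeping cleaner.

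The one real gap is the $\sqrt{\Delta}\log n$ bound. Taking $\lambda_j^{(t)}=\Theta(\sqrt{\log n})$ uniformly does \emph{not} give a per-round error of $O(\sqrt{\Delta\log n})$: the per-round error from Theorem~\ref{thm:main} is $\lambda_j\sqrt{W_j(y^{(t)})}\le \lambda_j\sqrt{|S_j|}$, and $|S_j|$ has nothing to do with $\Delta$ in general. (And even if the per-round error were $O(\sqrt{\Delta\log n})$, summing over $O(\log n)$ rounds gives $\sqrt{\Delta}\,\log^{1.5}n$, not $\sqrt{\Delta}\log n$.) The missing idea is the Beck--Fiala style choice $\lambda_j=\Theta\bigl(\sqrt{\Delta/|S_j|}\bigr)$, which makes the per-round error $\lambda_j\sqrt{W_j}\le \lambda_j\sqrt{|S_j|}=O(\sqrt{\Delta})$; summing over $O(\log n)$ rounds gives the claimed $O(\sqrt{\Delta}\log n)$. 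The LM condition for this profile is verified using column sparsity: since $\sum_j|S_j|\le \Delta f$ on the $f$ fractional coordinates, at most $f/64$ constraints have $|S_j|>64\Delta$, and bucketing the rest by $|S_j|\in[2^{-h-1}64\Delta,2^{-h}64\Delta)$ shows $\sum_{j\in M_4}e^{-\lambda_j^2/K_0}\le f/48$. Your phrase ``column-sparse refinement of LM-rounding'' gestures at this, but the specific $\lambda_j$ you wrote down does not implement it.
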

\begin{proof} Let $y\in [0,1]^n$ be a fractional solution with $\sum_{i\in S_j} y_i = b_j$  for all $j\in [m]$. The algorithm in Theorem~\ref{thm:IP} uses Theorem~\ref{thm:main} iteratively to obtain the integral solution $x$.

In each iteration, we start with a fractional solution $y'$ with $f\le n$ {\em fractional} variables  and set the parameters $\lambda_j$ suitably so that 
 $\sum_{j=1}^m e^{-\lambda_j^2/K_0} \le \frac{f}{16}$. That is, the condition in Theorem~\ref{thm:main} is satisfied.  
 Note that $W_j(y')=\sum_{i\in S_j} (y'_i)^2\le y'(S_j)$ and $W_j(y')\le f$. Now, by applying Theorem~\ref{thm:main}, we would obtain a new fractional solution $y''$ such that: 
\begin{itemize}
\item For each $j\in [m]$, $|y''(S_j)-y'(S_j)|\le \lambda_j \sqrt{W_j(y')} + \frac1n \le O(\lambda_j)\cdot \sqrt{f}$.
\item The number of fractional variables in $y''$ is at most $\frac{f}{K}$ for some constant $K>1$.
\end{itemize}  
Therefore, after $\frac{\log n}{\log K} = O(\log n)$ iterations we obtain a solution with $O(1)$ fractional variables. Setting these fractional variables arbitrarily to $0-1$ values, we obtain an integral solution $x$.

Let us partition the constraints into sets $M_1, M_2, M_3$ and $M_4$ based on which of the four terms in Theorem~\ref{thm:IP} is minimum. That is, $M_1\sse [m]$ consists of constraints $j\in [m]$ where $\sqrt{j}$ is smaller than the other three terms; $M_2,M_3,M_4$ are defined similarly. Below we show how to set the parameters $\lambda_j$ and bound the constraint violations for these parts separately.

\paragraph{Error bound of $\min\{\sqrt{j},\, \sqrt{n\log(m/n)}\}$ for $j\in M_1\cup M_2$.} In any iteration with $f\le n$ fractional variables, we set the parameters $\lambda_j$s in Theorem~\ref{thm:main} as follows:
$$\lambda_j = \left\{
\begin{array}{ll}
0 & \mbox{ if }j<c_1 f\\
\sqrt{c_2\, \log\frac{j}{c_1f}} & \mbox{ if }j\ge c_1 f
\end{array}\right.
$$
Here $c_1$ and $c_2$ are constants that will be fixed later. 
Note that  
$$\sum_{j\in M_1\cup M_2}^m e^{-\lambda_j^2/K_0} \, \le \, c_1 f + \sum_{j\ge c_1 f} e^{-\frac{c_2}{K_0}\log\frac{j}{c_1f}} \,\le \, c_1f + \sum_{i\ge 0} 2^ic_1f \cdot e^{-ic_2/K_0} \, \le \, c_1f + c_1 f \sum_{i\ge 0} 2^{-i} \, \le \, 3c_1f,$$
which is at most $f/48$ for $c_1<1/150$. The second inequality above is obtained by bucketing the $j$s into intervals of the form $[2^i\cdot c_1f,\, 2^{i+1}\cdot c_1f]$. The third inequality uses $c_2\ge 2K_0$.

We now bound the error incurred. 
\begin{enumerate}
\item Consider first a constraint $j\le n$. Note that $\lambda_j$ stays zero until the number of fractional variables $f$ drops below $j/c_1$. So we can bound $|x(S_j)-b_j|$ by:
$$\sum_{i\ge 0} \sqrt{c_2 \frac{j}{c_1K^i}\cdot \log K^i} \le O(\sqrt{j}) \, \sum_{i\ge 0} \sqrt{i}K^{-i/2} = O(\sqrt{j}),$$
where $i$ indexes the iterations of the algorithm after $f$ drops below $j/c_1$ for the first time.

\item Now consider a constraint $j>n$. Similarly, we bound $|x(S_j)-b_j|$ by:
$$\sum_{i\ge 0} \sqrt{c_2 \frac{n}{K^i}\cdot \log(\frac{j}{c_1n} K^i)} \le O(\sqrt{n\log(j/n)}) \, \sum_{i\ge 0} \sqrt{i}K^{-i/2} = O(\sqrt{n\log(j/n)}).$$
Here $i$ indexes the number of iterations of the algorithm from its start.
\end{enumerate}

\paragraph{Error bound of $\sqrt{L\cdot b_j} + L$ for $j\in M_3$, where $L=\Theta(\log m \log n)$.} Note that the additive term in this expression is at least $L$. If any $b_j<L$ then we increase it to $L$ (and add dummy elements to $S_j$ and ensure $y(S_j)=L$); this only affects the error term by a constant factor as $L\le \sqrt{L\cdot b_j} + L\le 2L$. So in the following we assume that $\min_j b_j \ge L$.

Here we set $\lambda_j=\infty$ in all iterations, which satisfies $\sum_{j\in M_3} e^{-\lambda_j^2/K_0} =0$.

The analysis of the error incurred is similar to that in Lemma~\ref{lem:exp-deg-rank} and we only sketch the details;  the main difference is that we analyze the deviation in a combined manner over all $O(\log n)$ iterations. Fix any constraint $j\in [m]$. If we ignore the error due to the truncation steps over all iterations\footnote{This can be bounded by $o(1)$ exactly as in Event 2 of Lemma~\ref{lem:exp-deg-rank}.}  then we can write $|x(S_j)-b_j| = |\sum_{t=0}^P \gamma Z_t|$ where $\gamma=n^{-6}$ and $Z_t=\langle \oG_t , \mathbf{1}_{S_j}\rangle$; recall that each $\oG_t=D^{-1}G_t$ for random direction $G_t$ as in  Step~\ref{step:gaussian} of the algorithm in Section~\ref{sec:main-rounding}. Here $P= O(\log n/\gamma^2)$ since there are $O(\log n)$ iterations and $O(1/\gamma^2)$ steps in each iteration. We will use the concentration inequality in Theorem~\ref{thm:freedman} with martingale $\{Z_t\}_{t\ge 0}$ and stopping time $\tau$ being the first time $t'$ where $|\sum_{t=0}^{t'} Z_t| > \frac{1}{\gamma}\sqrt{Lb_j}$. Then it follows that at any step $t'$ before stopping, the current solution $y'$ satisfies $y'(S_j)-y(S_j) = \gamma \sum_{t=0}^{t'} Z_t \le \sqrt{Lb_j} \le b_j$ (using the  assumption $b_j\ge L$), i.e. $y'(S_j)\le 2b_j$. Now we can bound $W_\tau \le P \cdot O(b_j) = O(\log n/\gamma^2)\cdot b_j$. Using Theorem~\ref{thm:freedman} with $\ell=\sqrt{Lb_j}/\gamma$, we obtain:
$$ \Pr\left[ |\gamma\sum_{t=0}^\tau Z_t | \,\, \ge \,\,  \sqrt{Lb_j} \right] \le 2\exp\left(\frac{-Lb_j}{O(\log n)b_j}\right)\le \frac{1}{m^2},$$
by choosing a large enough constant in $L=O(\log m \log n)$. It follows that with probability at least $1-m^{-2}$, we have $\tau=P$ and 
$|x(S_j)-b_j|= |\sum_{t=0}^P \gamma Z_t| \le \sqrt{L\cdot b_j}$.
Finally, taking a union bound over $|M_3|\le m$ such events, we obtain that with high probability, $|x(S_j)-b_j|\le \sqrt{L\cdot b_j}$ for all $j\in M_3$.

\paragraph{Error bound of $\sqrt{\Delta}\log n$ for $j\in M_4$.}
Here we set  $\lambda_j= \sqrt{K_1\Delta}/\sqrt{|S_j|}$ in all iterations, where  $K_1$ is a constant to be fixed later. We first bound $\sum_{j\in M_4} e^{-\lambda_j^2/K_0}$. Note that when restricted to the $f$ fractional variables in any iteration, $\sum_{j=1}^m |S_j|\le \Delta f$ since each variable appears in at most $\Delta$ constraints. So the number of constraints with $|S_j|>64\Delta$ is at most $\frac{f}{64}$. For $h\ge 0$, the number of constraints with $|S_j| \in [2^{-h-1}64\Delta, 2^{-h}64\Delta)$ is at most $2^{h+1}\frac{f}{64}$. 
So,
$$\sum_{j\in M_4} e^{-\lambda_j^2/K_0} \le \frac{f}{64} + \sum_{h=0}^\infty 2^{h+1}\frac{f}{64} \exp\left(\frac{-K_1 \Delta}{2^{-h}64\Delta\cdot K_0}\right)\le \frac{f}{64} + \frac{f}{64}\sum_{h=0}^\infty 2^{h+1}  e^{-2^{h+2}}\le \frac{f}{48}.$$
The second inequality is by choosing large enough constant $K_1$.

We now bound the error incurred for any constraint $j\in M_4$. The error in a single iteration is at most $O(\sqrt{\Delta}) + \frac{1}{n}$. So the overall error $|x(S_j)-b_j|  = O(\sqrt{\Delta}\log n)$.

\paragraph{Overall iteration.} By setting the $\lambda_j$ parameters for the different parts $M_1,M_2,M_3,M_4$ as above, it follows that in any iteration with $f$ fractional variables, we have 
$\sum_{j=1}^m e^{-\lambda_j^2/K_0} \le \frac{f}{24}$ which satisfies the condition in Theorem~\ref{thm:main}. 
\end{proof}

\medskip

\noindent {\em Remark:} The above result also extends to the following ``group sparse'' setting. Suppose the constraints in $M_4$ are further partitioned into $g$ groups $\{G_k\}_{k=1}^g$ where the column sparsity restricted to constraints in each group $G_k$ is $\Delta_k$. Then we obtain an integral solution with $|x(S_j)-b_j|= O(\sqrt{g\cdot \Delta_k}\, \log n)$ for all $j\in G_k$. The only modification required in the above proof is to set $\lambda_j=\sqrt{K_1\cdot g\cdot \Delta_k}/\sqrt{|S_j|}$ for $j\in G_k$.

\subsection{Minimum Cost Degree Bounded Matroid Basis}

The input to the {\em minimum cost degree bounded matroid} problem (\dbm) is a matroid defined on elements $V=[n]$ with costs $d:V\rightarrow \mathbb{Z}_+$ and $m$ ``degree constraints'' $\{S_j, b_j\}_{j=1}^m$ where each $S_j\sse [n]$ and $b_j\in \mathbb{Z}_+$. 
The objective is to find a minimum-cost base $I$ in the matroid that obeys all the degree bounds, i.e. $|I\cap S_j|\le b_j$ for all $j\in[m]$. Here we make a minor technical assumption that all costs are polynomially bounded integers.

An algorithm for \dbm is said to be an $(\alpha, \beta\cdot b + \gamma)$-bicriteria approximation algorithm if for any instance, it finds 
a base $I$ satisfying $|I\cap S_j|\le \beta\cdot b_j+\gamma$ for all $j\in[m]$ and having cost at most $\alpha$ times the optimum (which satisfies all degree bounds).

\begin{theorem}\label{thm:dbm}
There is a randomized algorithm for \dbm, that on any instance, finds a base $I^*$ of cost at most the optimum where for each $j\in [m]$:
$$|I^*\cap S_j|\quad \le \quad O(1)\cdot \min\left\{ \sqrt{j},\,\, \sqrt{n \log(m/n)}, \,\, \sqrt{\log m \log n \cdot b_j} + \log m \log n,\,\, \sqrt{\Delta}\log n \right\}.$$
\end{theorem}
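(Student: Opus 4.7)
\medskip

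\noindent\textbf{Proof proposal.} The plan is to run exactly the iterative scheme used in Theorem~\ref{thm:IP}, but replacing each call to the unconstrained partial-rounding routine (Theorem~\ref{thm:main}) by a call to the matroid partial-rounding routine (Theorem~\ref{thm:mat-disc}), working inside the base polytope of $\sM$ (handled via the remark after Theorem~\ref{thm:mat-disc} that adds the equality $\sum_{i\in V} x_i=r(V)$ to $P(\sM)$). We first solve the LP relaxation $\min\{d^\top x : x\in B(\sM),\, x(S_j)\le b_j \,\,\forall j\}$ to obtain a fractional base $y_0$ with $d^\top y_0\le \OPT$ and $y_0(S_j)\le b_j$. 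Starting from $y_0$, each iteration feeds as side constraints: (i) the cost constraint $d^\top x \le d^\top y_0$ with $\lambda_{\text{cost}}=0$, and (ii) the $m$ degree constraints $x(S_j)\le b_j$ with $\lambda_j$'s set exactly as in the proof of Theorem~\ref{thm:IP}. Since Theorem~\ref{thm:mat-disc} preserves membership in $P(\sM)$ (and respects the $\lambda=0$ equality $\sum x_i=r(V)$), every iterate is a fractional base; and since the fractional count drops by a constant factor per iteration, after $O(\log n)$ calls we arrive at an integral base $I^*$.

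\medskip

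\noindent\textbf{Degree violations.} The bound on $|I^*\cap S_j|-b_j$ is obtained by reusing the four-part analysis from Theorem~\ref{thm:IP}. Partition $[m]=M_1\cup M_2\cup M_3\cup M_4$ according to which of the four terms in the $\min$ is dominating, and on each class set $\lambda_j$ as in that proof. The bounds $O(\sqrt{j})$ and $O(\sqrt{n\log(m/n)})$ on $M_1\cup M_2$ follow from the per-iteration setting $\lambda_j=\sqrt{c_2\log(j/c_1 f)}$ when $j\ge c_1 f$, combined with the geometric decrease of the fractional count $f$ and the estimate $W_j(y')\le y'(S_j)$. The bound $O(\sqrt{\Delta}\log n)$ on $M_4$ follows from $\lambda_j=\sqrt{K_1\Delta/|S_j|}$, using column sparsity to verify $\sum_{j\in M_4} e^{-\lambda_j^2/K_0}\le f/48$ via the same bucketing of $|S_j|$. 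Finally, the bound $O(\sqrt{L b_j}+L)$ with $L=O(\log m\log n)$ on $M_3$ requires the combined martingale analysis across all $O(\log n)$ iterations (rather than summing per-iteration bounds): set $\lambda_j=\infty$ per iteration, then apply Theorem~\ref{thm:freedman} to the martingale $Z_t=\langle \oG_t,\mathbf{1}_{S_j}\rangle$ accumulated over all steps of the algorithm, using as invariant $y'(S_j)\le 2b_j$ up to the stopping time $|\gamma\sum_t Z_t|>\sqrt{Lb_j}$, and taking a union bound over $j\in M_3$.

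\medskip

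\noindent\textbf{Cost preservation and main obstacle.} The delicate point is that the theorem demands $d(I^*)\le \OPT$ exactly, not merely approximately. With $\lambda_{\text{cost}}=0$, Theorem~\ref{thm:mat-disc} guarantees a cost drift of at most $\|d\|/n^2$ per iteration, so the total drift over the $O(\log n)$ iterations is $O(\|d\|\log n/n^2)=o(1)$ because costs are polynomially bounded integers. Since $d(I^*)$ is an integer and $d^\top y_0\le \OPT$, this gives $d(I^*)\le \lfloor d^\top y_0+o(1)\rfloor\le \OPT$. The main obstacle I anticipate is verifying simultaneously that all four classes $M_1,\ldots,M_4$ together with the extra cost constraint satisfy the hypothesis $\sum_j e^{-\lambda_j^2/K_0}<f/16$ of Theorem~\ref{thm:mat-disc} in every iteration; the verification is the same as in Theorem~\ref{thm:IP} (each $M_r$ contributes at most $f/48$, and the single cost constraint contributes $e^0=1$, which is absorbed since $f=\Omega(1)$ until termination). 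Interpreting the statement as a violation bound on $|I^*\cap S_j|-b_j$ (matching~\eqref{eq:set-pack-bound3}), all four claimed bounds follow.
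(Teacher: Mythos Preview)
Your overall approach is the paper's: iterate Theorem~\ref{thm:mat-disc} inside the base polytope, treat cost as a $\lambda_{\text{cost}}=0$ side constraint, and reuse the four-class $\lambda_j$ settings and violation analysis from Theorem~\ref{thm:IP}. But two steps fail as written.

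The main gap is termination. You assert that after $O(\log n)$ calls ``we arrive at an integral base $I^*$,'' but Theorem~\ref{thm:mat-disc} cannot be invoked once the fractional count $f$ drops below $\max\{K_0,48\}$: both the hypothesis $n>K_0$ and the requirement $\sum_j e^{-\lambda_j^2/K_0}<f/16$ (to which $\lambda_{\text{cost}}=0$ already contributes $1$) block further calls. The paper halts the iteration at $f=O(1)$, observes that the resulting point $x$ still lies in the base polytope, writes $x$ as a convex combination of integral bases, and outputs the \emph{minimum-cost} base $I^*$ in that decomposition. This last move is what gives $d(I^*)\le\langle d,x\rangle<\langle d,y\rangle+1\le\OPT+1$, hence $d(I^*)\le\OPT$ by integrality; and since every base in the decomposition agrees with $x$ on its already-integral coordinates, the additional degree violation incurred here is only $O(1)$. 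Your proposal omits this step entirely.

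A second, smaller gap is the cost-drift estimate: the claim $O(\|d\|\log n/n^2)=o(1)$ is false in general, since ``polynomially bounded'' costs allow $\|d\|$ to be any fixed power of $n$. The paper fixes this by invoking Remark~3 after Theorem~\ref{thm:main} (the error term $\|a_j\|/n^2$ can be replaced by $\|a_j\|/n^c$ for any constant $c$), so that the per-iteration cost drift is at most $1/n$ and the total drift over $O(\log n)$ iterations is strictly less than $1$.
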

\begin{proof}
Let $y\in [0,1]^n$ be an optimal solution to the natural LP relaxation of \dbm. We now describe the rounding algorithm: this is based on iterative applications of Theorem~\ref{thm:mat-disc}. First, we incorporate the cost as a special degree constraint $v_0= d$ indexed zero. 
We will require zero violation in the cost during each iteration, i.e. $\lambda_0=0$ always. We partition the degree constraints $[m]$ as in Theorem~\ref{thm:IP}: recall the definitions of $M_1,M_2,M_3,M_4$, and the setting of their $\lambda_j$ parameters in each iteration.

In each iteration, we start with a fractional solution $y'$ with $f\le n$ {\em  fractional} variables. Using the same calculations as Theorem~\ref{thm:IP}, we have $\sum_{j=0}^m e^{-\lambda_j^2/K_0} \le 1+\frac{f}{24}\le \frac{f}{16}$ assuming $f\ge 48$. 
For now assume $f\ge \max\{K_0, 48\}$; applying Theorem~\ref{thm:mat-disc}, we obtain a new fractional solution $y''$ that has:
\begin{itemize}
\item $|\langle v_0 , y'' -y' \rangle |\le \|d\|/n^{O(1)}\le \frac1n$.
\item For each $j\in [m]$,  $|y''(S_j)-y'(S_j)|\le \lambda_j \sqrt{W_j(y')} + \frac1n$. 
\item The number of  fractional variables in $y''$ is at most $\frac{f}{K'}$ for some constant $K'>1$.
\end{itemize}  
The first condition uses the fact that the error term $\|a_j\|/n^2$ in Theorem~\ref{thm:mat-disc} can be reduced to $\|a_j\|/n^c$ for any constant $c$, and that $\|d\|\le poly(n)$ as we assumed all costs to be polynomially bounded.

We repeat these iterations as long as $f\ge \max\{K_0,48\}$ : this takes $T\le \frac{\log n}{\log K'} = O(\log n)$ iterations. The violation in the cost (i.e. constraint $j=0$) is at most $\frac{T}{n}<1$. For any degree constraint $j\in [m]$, the violation is exactly as in Theorem~\ref{thm:IP}.

At the end of the above iterations, we are left with an almost integral solution $x$: it has $O(1)$ fractional variables. Notice that $x$ lies in the matroid base polytope: so it can be expressed as a convex combination of (integral) matroid bases. We output the minimum cost base $I^*$ in this convex decomposition of $x$. Note that the cost of solution $I^*$ is at most that of $x$ which is less than  $\langle d, y\rangle + 1$.  Moreover, $I^*$ agrees with $x$ on all integral variables of $x$: so the worst case additional violation of any degree constraint is just $O(1)$.
\end{proof}

We state two special cases of this result, which improve on prior work.
\begin{corollary}\label{cor:dbm}
There are randomized bicriteria approximation algorithms for \dbm with ratios $(1,b+O(\sqrt{n\log(m/n)}))$ and $(1, O(\sqrt{\Delta}\log n))$. 
\end{corollary}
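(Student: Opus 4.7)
The plan is to derive Corollary~\ref{cor:dbm} as an immediate specialization of Theorem~\ref{thm:dbm}: that theorem already yields, in a single algorithm, a violation bound given by the minimum of four expressions, and the two ratios claimed in the corollary correspond to isolating the second and fourth of these.

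First, I would run the algorithm of Theorem~\ref{thm:dbm} on the given \dbm instance starting from an optimal extreme point $y$ of the natural LP relaxation. The output base $I^*$ satisfies $d(I^*)\le \langle d,y\rangle \le \opt$, because during the rounding the cost is treated as a degree-type constraint with $\lambda_0=0$ (so cost deviation is only the $o(1)$ slack accumulated over $O(\log n)$ iterations), and the final convex-decomposition step selects a minimum-cost integer base in the decomposition of the almost-integral output; with costs assumed to be polynomially bounded integers, the accumulated $o(1)$ slack cannot push the integer cost above $\opt$. Moreover, for every $j \in [m]$,
$$
\bigl|\,|I^*\cap S_j|-b_j\,\bigr| \;\le\; O(1)\cdot \min\left\{\sqrt{j},\; \sqrt{n\log(m/n)},\; \sqrt{\log m \log n\cdot b_j}+\log m \log n,\; \sqrt{\Delta}\log n\right\}.
$$

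The first ratio $(1,\,b+O(\sqrt{n\log(m/n)}))$ is then obtained by retaining only the second term of the minimum, yielding $|I^*\cap S_j|\le b_j+O(\sqrt{n\log(m/n)})$ for every $j$. The second ratio $(1,\,O(\sqrt{\Delta}\log n))$ is obtained by retaining the fourth term $\sqrt{\Delta}\log n$, which gives the column-sparse additive guarantee. Both bounds are achieved simultaneously by the same run of the algorithm, so the corollary follows from a single application of Theorem~\ref{thm:dbm} rather than from two separate procedures as in the prior work of \cite{CVZ10,KLS08}.

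There is no substantial obstacle here: the corollary is meant to highlight two natural regimes in which the min-bound of Theorem~\ref{thm:dbm} recovers (and improves on) previously known trade-offs, and the proof is a one-line specialization. The only minor bookkeeping item is confirming that the cost slack over all iterations is genuinely $o(1)$ and hence absorbed by picking the cheapest base in the final convex decomposition, both of which are already argued inside the proof of Theorem~\ref{thm:dbm}.
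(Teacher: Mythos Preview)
Your proposal is correct and matches the paper's approach: the paper presents Corollary~\ref{cor:dbm} simply as two special cases of Theorem~\ref{thm:dbm}, obtained by reading off the second and fourth terms of the minimum, with no additional argument. Your added commentary on cost preservation just recapitulates what is already established inside the proof of Theorem~\ref{thm:dbm}.
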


Previously, \cite{CVZ10} obtained a $(1,b+O(\sqrt{n\log(m)}))$ bicriteria approximation and~\cite{KLS08} obtained a $(1,\Delta-1)$ bicriteria approximation for \dbm.

\def\mcm{\ensuremath{\mathsf{MCM}}\xspace}
\subsection{Multi-criteria Matroid Basis}
The input to the {\em multi-criteria matroid basis} is a matroid \sM defined on elements $V=[n]$ with $k$ different cost functions $d_j:[n]\rightarrow \mathbb{Z}_+$ (for $j=1,\cdots,k$) and budgets $\{B_j\}_{j=1}^k$. The goal is to find (if possible) a basis  $I$ with $d_j(I)\le B_j$ for each $j\in [k]$. We obtain:
\begin{theorem}\label{thm:mcm}
There is a randomized algorithm for multi-criteria matroid basis, that given any $\epsilon>0$ finds in $n^{O(k^{1.5}\,/\,\epsilon)}$ time, a basis $I$ with $d_j(I)\le (1+\epsilon)B_j$ for all $j\in[k]$. 
\end{theorem}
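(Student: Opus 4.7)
The strategy mirrors the enumeration approach of \cite{GRSZ14}: first guess a set $F$ of ``heavy'' elements that appear in the optimal basis, then round the fractional solution of the residual LP. The gain over \cite{GRSZ14} comes from using Theorem~\ref{thm:mat-disc} in place of swap rounding; because Theorem~\ref{thm:mat-disc} gives the stronger error bound $\lambda_j\sqrt{W_j(y)}$ (and in particular, zero violation when $\lambda_j=0$ is feasible), we can use a looser threshold for being heavy and hence need to enumerate far fewer guesses.

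First, fix a per-constraint threshold $\tau_j := c\,\epsilon B_j/\sqrt{k}$ for a small constant $c$, and call element $i$ heavy for $j$ if $d_{ji}>\tau_j$. The optimal basis $I^*$ contains at most $B_j/\tau_j = O(\sqrt{k}/\epsilon)$ elements heavy for $j$, hence at most $g=O(k^{1.5}/\epsilon)$ heavy elements overall. Enumerate over all subsets $F$ of heavy elements with $|F|\le g$, in time $n^{O(g)}=n^{O(k^{1.5}/\epsilon)}$. For each enumerated $F$, solve the LP that asks for $y\in P(\sM)$ with $y_i=1$ for $i\in F$, $y_i=0$ for heavy elements outside $F$, and $\sum_i d_{ji}y_i\le B_j$ for every $j\in [k]$. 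At least one guess, namely $F=I^*\cap\{\mathrm{heavy}\}$, yields a feasible LP (witnessed by $I^*$).

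To round a feasible $y$, apply Theorem~\ref{thm:mat-disc} iteratively, treating the $k$ budgets as the side constraints $\{a_j\}$. In iteration $i$ with $n_i$ fractional variables, set $\lambda_j^{(i)}=0$ as long as $n_i\ge 16k$ (the condition $\sum_j e^{-\lambda_j^2/K_0}=k\le n_i/16$ is met, and these rounds preserve the cost constraints up to the negligible $\|a_j\|/n^2$ term), and set $\lambda_j^{(i)}=\sqrt{K_0\log(16k/n_i)}$ once $n_i<16k$. Since every non-heavy element has $d_{ji}\le \tau_j$, an easy bound gives $W_j(y^{(i)})\le \tau_j\cdot d_j(y^{(i)})=O(\tau_j B_j)$, so the per-iteration violation in constraint $j$ is $O\bigl(\lambda_j^{(i)} B_j\sqrt{\epsilon/\sqrt{k}}\bigr)$. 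The only nonzero rounds are the $O(\log k)$ final ones, and the partial sums $\sum_i \lambda_j^{(i)}$ telescope geometrically to $O(\mathrm{polylog}(k))$. The resulting total violation per cost constraint is $O\bigl(B_j\sqrt{\epsilon/\sqrt{k}}\cdot\mathrm{polylog}(k)\bigr)\le \epsilon B_j$ in the relevant regime $\epsilon\ge \mathrm{polylog}(k)/\sqrt{k}$ where $n^{O(k^{1.5}/\epsilon)}$ is the dominant runtime. The $O(1)$ variables still fractional after rounding are resolved by expressing the near-integral point as a convex combination of matroid bases and outputting any base in the support that satisfies all budgets (the existence is guaranteed by an averaging argument).

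The main obstacle is the careful telescoping across the $O(\log k)$ nonzero-$\lambda$ rounds, which requires an inductive invariant that $d_j(y^{(i)})\le O(B_j)$ persists throughout the rounding so that the bound $W_j(y^{(i)})\le O(\tau_j B_j)$ remains valid; this is analogous to the martingale-concentration argument inside Lemma~\ref{lem:exp-deg-rank}. The exact preservation of the matroid base polytope (and the treatment of a cost objective as an additional side constraint with $\lambda_0=0$ throughout, exactly as in the proof of Theorem~\ref{thm:dbm}) is precisely the feature of Theorem~\ref{thm:mat-disc} that enables this enumeration-plus-rounding scheme to beat the $n^{O(k^2/\epsilon)}$ bound of \cite{GRSZ14}.
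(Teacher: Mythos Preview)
Your overall scheme is exactly the paper's: the same heaviness threshold $\tau_j=\Theta(\epsilon B_j/\sqrt{k})$, the same $n^{O(k^{1.5}/\epsilon)}$ enumeration, the same residual LP, and the same two-phase iterative application of Theorem~\ref{thm:mat-disc} with $\lambda_j=0$ while $n_i\ge 16k$ and $\lambda_j^{(i)}=\sqrt{K_0\log(16k/n_i)}$ afterwards. So the plan is right.

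The gap is in your error analysis of the second phase. You bound $W_j(y^{(i)})\le \tau_j\cdot d_j(y^{(i)})=O(\tau_j B_j)$, which is \emph{constant} across iterations; then you sum $\sum_i\lambda_j^{(i)}=O((\log k)^{3/2})$, obtaining a total violation $O\bigl(B_j\sqrt{\epsilon/\sqrt{k}}\cdot(\log k)^{3/2}\bigr)$. This is $\le \epsilon B_j$ only when $\epsilon\gtrsim (\log k)^3/\sqrt{k}$, and your parenthetical about ``the relevant regime'' does not rescue the theorem, which is asserted for every $\epsilon>0$ with runtime $n^{O(k^{1.5}/\epsilon)}$; there is no fallback algorithm with that runtime for small $\epsilon$. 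The paper avoids this by using the cruder-looking but sharper bound $W_j(y^{(i)})\le n_i\cdot (d_j^{\max})^2\le n_i\,\tau_j^2$, which \emph{decreases geometrically} with $n_i$. Then the per-iteration error is $\lambda_j^{(i)}\sqrt{n_i}\,\tau_j$, and $\sum_i \sqrt{n_i\log(N/n_i)}=O(\sqrt{N})=O(\sqrt{k})$, giving total violation $O(\sqrt{k})\cdot \tau_j=O(\epsilon)B_j$ uniformly in $\epsilon$. This also removes the need for your inductive invariant $d_j(y^{(i)})=O(B_j)$, since the paper's $W_j$ bound does not reference $d_j(y^{(i)})$ at all.

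A minor second point: for the last $O(1)$ fractional variables, an ``averaging argument'' over a convex decomposition only guarantees, for each $j$, \emph{some} base meeting budget $j$; it does not yield one base meeting all $k$ budgets simultaneously. The correct (and simpler) justification, as in the proof of Theorem~\ref{thm:dbm}, is that any base in the decomposition agrees with the near-integral point on all integral coordinates, so it differs in at most $O(1)$ entries and incurs at most $O(1)\cdot d_j^{\max}=O(\epsilon B_j/\sqrt{k})$ additional violation per constraint.
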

Previously, \cite{GRSZ14} obtained a deterministic algorithm for \mcm that required $n^{O(k^2\,/\,\epsilon)}$ time. One could also use the algorithm of \cite{CVZ10} to obtain a randomized PTAS for \mcm, but this approach requires at least $n^{\Omega(k\,/\,\epsilon^2)}$ time. Our running time is better when $\epsilon<1/\sqrt{k}$.

We now describe the algorithm in Theorem~\ref{thm:mcm}. An element $e$ is said to be {\em heavy} if its $j^{th}$ cost $d_j(e)>\frac{\epsilon}{\sqrt{k}} B_j$ for any $j\in[k]$. Note that the optimal solution contains at most $\frac{k^{1.5}}{\epsilon}$ heavy elements. The algorithm first guesses by enumeration all heavy elements in the optimal solution. Let $\sM'$ denote the matroid obtained by contracting these heavy elements. Let $B'_j$ denote the residual budget for each $j\in[k]$.  The algorithm now solves the natural LP relaxation:
$$x\in P(\sM'),\quad \langle d_j, x\rangle \le B'_j,\,\, \forall j\in[k].$$

The rounding algorithm is an iterative application of Theorem~\ref{thm:mat-disc}: the number of fractional variables decreases by a factor of $K>1$ in each iteration. 

As long as the 
number of fractional variables $n'<16k$, we use $\lambda_j=0$ for all $j\in[k]$; note that this satisfies the condition  $\sum_{j=1}^k e^{-\lambda_j^2/K_0} \le n'/16$. Note that there is no loss in any of the budget constraints in this first phase of the rounding. 

Once $n'\le N:=16k$, we choose each $\lambda_j=\sqrt{K_0\log (N/n')}$ which satisfies the condition on $\lambda$s. The loss in the $j^{th}$ budget constraint in such an iteration is at most $\lambda_j\sqrt{n'}\cdot d_j^{max}$ where $d_j^{max}\le \frac{\epsilon}{\sqrt{k}} B_j$ is the maximum cost of any element. So the increase in the  $j^{th}$ budget constraint over all iterations is at most:

$$d_j^{max} \cdot \sum_{i=0}^{t-1} \sqrt{K_0\,\frac{N}{K^i}\, \log (K^i)} \,\, \le  \,\, O(\sqrt{N})\cdot d_j^{max}  \,\, =  \,\, O(\epsilon)B_j.$$
Above $i$ indexes iterations in the second phase of rounding.

\subsection{Low Congestion Routing on Short Paths}
The routing on short paths (\rsp) problem is defined on an $n$-vertex directed graph $G=(V,E)$ with edge capacities $b:E\rightarrow \mathbb{Z}_+$. There are
$k$ source-sink pairs $\{(s_i,t_i)\}_{i=1}^k$ and a length bound $\Delta $. The goal in \rsp is to find an $s_i-t_i$ path $P_i$ of length at most $\Delta $ for each pair $i\in[k]$ such that the number of paths using any edge $e$ is at most $b_e$. 

The decision problem of determining whether there exist such paths is NP-complete. Hence we focus on bicriteria approximation algorithms, where we attempt to find paths $P_i$s that violate the edge capacities by a small amount.  As noted in~\cite{CVZ10}, we can use any LP-based  algorithm for \dbm to obtain one for \rsp: for completeness we describe this briefly below.

Let $\p_i$ denote the set of all $s_i-t_i$ paths of length at most $\Delta$. Consider the following LP relaxation for \rsp.
\begin{eqnarray*}
\sum_{P\in \p_i} x_{i,P} &\ge &1,\qquad \forall i\in [k]\\
\sum_{i=1}^k \sum_{P\in \p_i: e\in P} x_{i,P} &\le &b_e,\qquad \forall e\in E\\
x&\ge &0.
\end{eqnarray*}
Although this LP has an exponential  number of variables, it can be solved in polynomial time by an equivalent polynomial-size formulation using  a ``time-expanded network''. 

Given any feasible instance of \rsp, we obtain a fractional solution to the above LP. Moreover, the number of non-zero variables $x_{i,P}$ is at most $k+|E|=poly(n)$. Let $\p'_i$ denote the set of $s_i-t_i$ paths with non-zero value in this fractional solution. Consider now an instance of \dbm on groundset $U=\cup_{i=1}^k \p'_i$ where the matroid is a {\em partition matroid} that requires one element from each $\p'_i$. The degree constraints correspond to edges $e\in E$, i.e. $S_e=\{P\in U : e\in P\}$. The goal is to find a base $I$ in the partition matroid such that $|S_e\cap I|\le b_e$ for all $e\in E$. Note that the column sparsity of the degree constraints is $\Delta$ since each path in $U$ has length at most $\Delta$. Moreover $\{x_{i,P}\, :\, P\in \p'_i,\, i\in [k]\}$ is a feasible fractional solution to the LP relaxation of this \dbm instance. So we obtain:
\begin{corollary}\label{cor:rsp}
There is an algorithm that given any feasible instance of \rsp, computes an $s_i-t_i$ path of length at most $\Delta $ for each $i\in [k]$ where the number of paths using any edge $e$ is at most $b_e+\min\left\{O(\sqrt{\Delta }\log n),\, O(\sqrt{b_e} \log n+\log^2n)\right\}$. 
\end{corollary}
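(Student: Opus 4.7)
\medskip
\noindent\textbf{Proof proposal for Corollary~\ref{cor:rsp}.}

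The plan is to reduce \rsp to an instance of \dbm over a partition matroid and then invoke Theorem~\ref{thm:dbm}, noting that $\log m = O(\log n)$ in this regime. First I would set up the path LP relaxation displayed just above the corollary, where variables $x_{i,P}$ range over all $s_i$--$t_i$ paths $P$ of length at most $\Delta$. Although the number of such paths can be exponential, this LP has a polynomial-size equivalent obtained via a ``time-expanded'' (layered) network: replace $G$ by a DAG with $\Delta+1$ copies $V_0,V_1,\dots,V_\Delta$ of $V$, put an arc $(u,\ell)\to(v,\ell+1)$ of capacity $b_{(u,v)}$ (shared across layers by the edge capacity constraint) for each original edge $(u,v)$, and route one unit of flow from $(s_i,0)$ to $\{(t_i,\ell):0\le\ell\le\Delta\}$ for each pair $i$. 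Solving this polynomial-size LP and path-decomposing the flow per pair yields a fractional solution $\{x_{i,P}\}$ whose support $\bigcup_i \p'_i$ has size $O(k+|E|)=\mathrm{poly}(n)$.

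Next I would encode the rounding problem as a \dbm instance on the groundset $U=\bigcup_{i=1}^k \p'_i$ under the partition matroid $\sM$ whose parts are the sets $\p'_i$; a base of $\sM$ picks exactly one path per pair, as \rsp requires. For each edge $e\in E$ introduce a degree constraint with set $S_e=\{P\in U:e\in P\}$ and bound $b_e$. The column sparsity of this system is the maximum, over $P\in U$, of $|\{e:P\ni e\}|$, which is at most $\Delta$ since every $P\in U$ has length at most $\Delta$. The fractional solution $\{x_{i,P}\}$ lies in $P(\sM)$ (partition-matroid constraints are exactly the pair coverage inequalities) and satisfies every degree constraint, so it is a feasible fractional solution to the \dbm LP.

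Now I would apply Theorem~\ref{thm:dbm} with cost $d\equiv 0$ (any base is optimal) to obtain a base $I^\ast$ of $\sM$, i.e.\ a choice of one path $P_i\in \p'_i$ for each pair $i$, with
\[
|I^\ast\cap S_e|\;\le\; b_e+O(1)\cdot\min\bigl\{\sqrt{n\log(m/n)},\,\sqrt{\log m\log n\cdot b_e}+\log m\log n,\,\sqrt{\Delta}\log n\bigr\}
\]
for every $e\in E$. Since $m=|E|\le n^2$ in our instance, $\log m = O(\log n)$, so the second term simplifies to $O(\sqrt{b_e}\log n+\log^2 n)$, and the remaining third term $\sqrt{\Delta}\log n$ is already in the desired form. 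The $\sqrt{n\log(m/n)}$ term is dominated by $\sqrt{\Delta}\log n$ whenever $\Delta$ is not tiny, and otherwise both bounds are $O(\log^2 n)$, so dropping it is harmless. Taking the minimum of the two remaining terms gives exactly the bound claimed in the corollary.

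I expect no real obstacle here, as all the heavy lifting is done by Theorem~\ref{thm:dbm}; the only items that need care are (i) verifying that the time-expanded network faithfully captures the length-$\Delta$ constraint and yields a polynomial-size LP, and (ii) confirming that the reduction preserves column sparsity with parameter exactly $\Delta$, which follows because each groundset element is a single length-$\le\Delta$ path. Neither step requires any new idea beyond standard flow decomposition and the reduction already sketched in the excerpt.
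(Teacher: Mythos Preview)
Your proposal is correct and follows essentially the same route as the paper: reduce \rsp to \dbm over the partition matroid on the support of the path-LP solution, note the column sparsity is at most $\Delta$, and invoke Theorem~\ref{thm:dbm}, using that both the groundset size $|U|$ and the number of degree constraints $|E|$ are $\mathrm{poly}(n)$ so all logarithms collapse to $O(\log n)$. One small remark: your justification for dropping the $\sqrt{n\log(m/n)}$ term is unnecessary (and slightly off, since the relevant ``$n$'' there is $|U|$, not the vertex count) --- since the bound is a minimum, discarding any term can only weaken it, so no comparison is needed.
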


\paragraph{Multipath routing with laminar requirements} Our techniques can also handle a richer set of requirements in the \rsp problem. In addition to the graph $G$, pairs $\{(s_i,t_i)\}_{i=1}^k$ and length bound $\Delta$, there is a laminar family ${\cal L}$ defined on the pairs $[k]$ with an integer requirement $r_T$ on each set $T\in {\cal L}$. The goal in the {\em laminar \rsp} problem is to find a multiset of $s_i-t_i$ paths (for $i\in[k]$) such that:
\begin{enumerate}
\item each path has length at most $\Delta$,
\item for each $T\in {\cal L}$, there are at least $r_T$ paths between pairs of $T$, and
\item the number of paths using any edge $e$ is at most $b_e$. 
\end{enumerate}

Consider the following LP relaxation for this problem.
\begin{eqnarray*}
\sum_{i\in T} \,\, \sum_{P\in \p_i} x_{i,P} &\ge &r_T,\qquad \forall T\in {\cal L}\\
\sum_{i=1}^k \,\, \sum_{P\in \p_i: e\in P} x_{i,P} &\le &b_e,\qquad \forall e\in E\\
x&\ge &0.
\end{eqnarray*}
This LP can again be solved using an equivalent polynomial-sized LP. Let $\p'_i$ denote the set of $s_i-t_i$ paths with non-zero value in this fractional solution, and define groundset $U=\cup_{i=1}^k \p'_i$. As before, we also define ``degree constraints'' corresponding to edges $e\in E$, i.e. at most $b_e$ elements can be chosen from $S_e=\{P\in U : e\in P\}$.
Unlike the usual \rsp problem we can not directly cast these laminar requirements as a matroid constraint, 
but a slight modification of the \dbm algorithm works. 

The main idea is that the partial rounding result (Theorem~\ref{thm:mat-disc}) also holds if we want to exactly preserve any laminar family ${\cal L}$ of constraints (instead of a matroid). Note that a laminar family on $|U|$ elements might have $2|U|$ sets. However, it is easy to see that the number of {\em tight} constraints of ${\cal L}$ at any strictly fractional solution is at most $|U|/2$. Using this observation in place of Claim~\ref{cl:mat-tight-rank}, we obtain the partial rounding result also for laminar constraints.

Finally using this partial rounding as in Theorem~\ref{thm:dbm}, we obtain:
\begin{theorem}\label{thm:lam-rsp}
There is an algorithm that given any feasible instance of laminar \rsp, computes a multiset ${\cal Q}$ of  $s_i-t_i$ paths such that:
\begin{enumerate}
\item each path in ${\cal Q}$ has length at most $\Delta$,
\item for each $T\in {\cal L}$, there are at least $r_T$ paths in ${\cal Q}$ between pairs of $T$, and
\item the number of paths in ${\cal Q}$ using any edge $e$ is at most: 
$$b_e+\min\left\{O(\sqrt{\Delta }\log n),\, O(\sqrt{b_e} \log n+\log^2n)\right\}.$$ 
\end{enumerate}
\end{theorem}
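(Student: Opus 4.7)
The plan is to mimic the reduction and rounding strategy already used for \rsp (Corollary~\ref{cor:rsp}), but replace the matroid polytope in Theorem~\ref{thm:mat-disc} with the polytope defined by the laminar covering family $\mathcal{L}$. First I would solve the displayed LP relaxation using an equivalent polynomial-size time-expanded formulation, obtain a fractional solution $x$ with polynomially many nonzero coordinates, and form the groundset $U=\cup_{i=1}^k \mathcal{P}'_i$ of supported short paths. The edge-capacity constraints $|S_e\cap I|\le b_e$ (with $S_e=\{P\in U: e\in P\}$) play the role of the ``degree constraints'' in \dbm, and each such constraint has column sparsity at most $\Delta$ since every path in $U$ has at most $\Delta$ edges.

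The central step is to prove an analogue of Theorem~\ref{thm:mat-disc} in which the matroid polytope $P(\sM)$ is replaced by the polytope $P(\mathcal{L})=\{x\ge 0 : \sum_{i\in T} x_i \ge r_T\ \forall T\in\mathcal{L}\}$ (restricted to the groundset $U$). Inspecting the proof in Section~\ref{sec:main-rounding}, the only place where the matroid structure is used is in Claim~\ref{cl:mat-tight-rank}, which bounds the dimension of the tight rank constraints by $n/2$ at a strictly fractional point. For a laminar family $\mathcal{L}$ on $|U|$ elements, any strictly fractional feasible point has the property that each tight set $T\in\mathcal{L}$ (i.e.\ $x(T)=r_T$) contains at least two elements (since $r_T$ is integer and no coordinate is $0$ or $1$); a standard laminar counting argument then bounds the number of linearly independent tight constraints by $|U|/2$. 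Substituting this estimate into Step~\ref{step:rank-tight} of the rounding algorithm and keeping all other steps unchanged (the random walk, the potential argument, the side-constraint concentration via Theorem~\ref{thm:freedman}) yields a partial-rounding statement for $(x,\mathcal{L})$ with the same guarantees as Theorem~\ref{thm:mat-disc}.

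Next I would iterate this partial rounding exactly as in the proof of Theorem~\ref{thm:dbm}, setting the $\lambda_j$ parameters for the edge constraints by the same four-way partition $M_1,M_2,M_3,M_4$ used in Theorem~\ref{thm:IP}, which gives the claimed $\min\{O(\sqrt{\Delta}\log n),\, O(\sqrt{b_e}\log n+\log^2 n)\}$ additive violation per edge. Throughout the iterations the laminar constraints are preserved exactly (since $\lambda$ for them is effectively zero and they stay in the subspace orthogonal to the random walk), so all lower-bound requirements $x(T)\ge r_T$ remain intact; thus each intermediate solution is still feasible for $P(\mathcal{L})$. After $O(\log n)$ iterations we are left with $O(1)$ fractional variables. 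To produce the final multiset, I would decompose the remaining fractional point as a convex combination of integer points in $P(\mathcal{L})$ (which is integral, being an intersection of laminar covering constraints with $x\ge 0$, equivalently a laminar coverage polytope), and output one such integer point; the rounding of $O(1)$ remaining coordinates only adds a constant to the edge violations and preserves the laminar lower bounds (one can always round up to maintain coverage), as well as the length bound $\Delta$ since every path in $U$ already satisfies it.

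The main obstacle is the replacement of Claim~\ref{cl:mat-tight-rank}: I need to argue that the ``tight laminar'' subspace has dimension at most $|U|/2$ at any strictly fractional iterate, and also to verify that the small-error truncation in Step~\ref{step:walk-trunc} still keeps the iterate inside $P(\mathcal{L})$ (which reduces to checking that intersecting the random-walk step with a tight laminar hyperplane decreases $\dim(\sV_t)$ by one, exactly as in Claim~\ref{cl:num-truc-steps}). Once these two structural facts are in hand, the martingale and potential analyses of Lemmas~\ref{lem:exp-deg-rank} and~\ref{lem:var-tight} transfer verbatim, and the rest of the argument is the same cost-preserving iteration as in Theorem~\ref{thm:dbm}.
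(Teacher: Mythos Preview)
Your proposal is correct and follows essentially the same route as the paper: solve the LP, replace the matroid polytope in Theorem~\ref{thm:mat-disc} by the laminar polytope, observe that the only structural fact needed is the analogue of Claim~\ref{cl:mat-tight-rank} (at most $|U|/2$ linearly independent tight laminar constraints at a strictly fractional point), and then iterate as in Theorem~\ref{thm:dbm}. The paper states exactly this outline in the paragraph preceding Theorem~\ref{thm:lam-rsp}, so your plan matches it; your additional remarks on the final $O(1)$-variable cleanup and on Claim~\ref{cl:num-truc-steps} carrying over verbatim are accurate and slightly more detailed than the paper's sketch.
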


\bibliographystyle{alpha}
\bibliography{mybib}

\appendix

\section{Useful Linear Programming Facts}

\begin{fact}\label{fact:LP-point}
Consider any polyhedron given by $P=\{x: Ax\le b\}$ where all entries in $A,b$ are integers of size at most  $\log_2 B$. Then there is a polynomial (in $\log  B$ and size of $A$) time algorithm that  given any point $u\in P$,  
finds another point $v^*\in P$ where (i) $\|u-v^*\|_1 \le \frac{1}{n^7 B}$ and (ii) all entries in $v^*$ are rationals of size $O(n^2\log B)$.
\end{fact}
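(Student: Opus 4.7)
The plan is to reduce the problem to standard polyhedral theory via Carathéodory's theorem. The two facts we rely on are: (a) every vertex of $P$ has rational entries of bit-size $O(n\log B)$, because any vertex is determined by $n$ linearly independent tight constraints, and by Cramer's rule on this integer $n\times n$ system its coordinates are ratios of sub-determinants, each bounded by $n!\,B^n$ via Hadamard's inequality; and (b) Carathéodory's theorem guarantees that $u = \sum_{i=1}^{k}\lambda_i v_i$ for some $k\le n+1$ vertices $v_1,\ldots,v_k$ of $P$ with $\lambda_i\ge 0$ and $\sum_i \lambda_i=1$. The strategy is then to compute such a decomposition and round the coefficients $\lambda_i$ to rationals with small common denominator, obtaining a point $v^*=\sum_i \tilde\lambda_i v_i\in P$ that is close to $u$ and whose entries have small bit-size.

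To compute the Carathéodory decomposition algorithmically, I would use iterated vertex extraction. Starting from the current point $x$ (initially $x=u$), identify the set of tight constraints of $P$ at $x$; if these span $\mathbb{R}^n$ then $x$ is already a vertex, otherwise pick any nonzero direction $d$ in their null-space via Gaussian elimination and move $x\mapsto x+\mu d$ until a new constraint becomes tight. After at most $n$ such steps we reach a vertex $v_1$. Now $u$ lies on the segment between $v_1$ and the point $u' := u + s(u-v_1)$, where $s>0$ is the maximum value for which $u'\in P$; this point $u'$ lies on strictly more tight constraints of $P$ than $u$, and we recurse on $u'$. Since the dimension of the minimal face containing the current point strictly decreases at each recursion, we obtain the desired decomposition with $k\le n+1$ vertices in polynomial time.

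Finally, I would round and bound. Setting $N := \lceil C\cdot n^9\,n!\,B^{\,n+1}\rceil$ for a suitable constant $C$, define $\tilde\lambda_i=\lfloor N\lambda_i\rfloor/N$ for all but one index, and adjust the remaining coefficient to enforce $\sum_i \tilde\lambda_i=1$ and $\tilde\lambda_i\ge 0$. Because $\|v_i\|_\infty\le n!\,B^n$, we get
\[
\|v^*-u\|_1\;\le\;\sum_i |\tilde\lambda_i-\lambda_i|\cdot\|v_i\|_1\;\le\;\frac{2(n+1)\cdot n\cdot n!\,B^n}{N}\;\le\;\frac{1}{n^7 B},
\]
and $\log_2 N = O(n\log B)$. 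Each entry of $v^*$ is a sum of $k\le n+1$ products of rationals of size $O(n\log B)$; taking a common denominator of the $k$ denominators yields an entry of bit-size $O(n^2\log B)$ as required, and $v^*\in P$ as a convex combination of vertices of $P$. The main obstacle is purely algorithmic: ensuring the vertex-extraction phase runs in time polynomial in $\log B$ and the size of $A$ even when $u$ itself may have large bit-size. This is handled by performing exact rational arithmetic throughout the $O(n)$ extraction steps, where each step only involves Gaussian elimination on integer submatrices of $A$ (whose bit-size is controlled by $\log B$) together with simple ratio computations against the entries of $x$.
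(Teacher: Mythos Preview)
Your approach via Carath\'eodory is genuinely different from the paper's and is essentially sound for \emph{bounded} polyhedra, but there is a real gap: you tacitly assume $P$ is a polytope, so that it is the convex hull of its vertices and your vertex-extraction walk terminates. The Fact, however, is stated for an arbitrary polyhedron $P=\{x:Ax\le b\}$, and the very next Fact in the paper invokes it with $P$ equal to a linear subspace $\{x:Ax=0\}$, which has no vertices at all. In that setting your walk along a null-space direction never hits a new tight constraint, and the Carath\'eodory decomposition into vertices simply does not exist. One might try to repair this by first intersecting $P$ with a box around a rounding of $u$, but arranging that box to have coefficients of size $O(\log B)$ (independent of the bit-size or magnitude of $u$) while still containing $u$ is precisely the subtlety that needs an idea.

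The paper's argument avoids this entirely and is shorter. It rounds $u$ to a nearby grid point $u'$ with denominator $L=2n^8B$, then solves a single LP minimizing $\sum_i d_i$ subject to $v\in P$, $|v_i-u'_i|\le d_i$, and $\sum_i d_i\le 1$. The last two constraints force the feasible region to be bounded even when $P$ is not, so an optimal \emph{extreme point} $(v^*,d^*)$ exists; since all data in this LP are integers bounded by $L$, standard vertex bit-size bounds give entries of $v^*$ of size $O(n^2\log B)$. Feasibility of $u$ certifies optimum value at most $\|u-u'\|_1\le n/L$, hence $\|v^*-u\|_1\le 2n/L=1/(n^7B)$. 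Your approach, when it applies, trades this single LP solve for an explicit $O(n)$-round Carath\'eodory extraction plus a coefficient-rounding step; it is more constructive but longer, and as written it does not cover the unbounded case that the paper actually needs.
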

\begin{proof} Let $L:=2n^8 B$ and  $u'$ denote the point with coordinates $u'_i = \frac{1}{L} \lfloor L\cdot u_i\rfloor$ for all $i\in[n]$. 
We now write a linear program that computes the point $v\in P$ with minimum $\ell_1$ distance from $u'$. 
$$\begin{array}{lll}
\min &\sum_{i=1}^n d_i &\\
\mbox{s.t.} &  Av\le b & \\
& |Lv_i-Lu'_i|\le Ld_i & \forall i\in [n]\\
&\sum_{i=1}^n d_i \le 1 &  \\
& \mathbf{d}, \mathbf{v} \in \R^n. &
\end{array}
$$
Note that the feasible region of this LP is a polytope (bounded polyhedron) due to the last two constraints. So there is an  optimal extreme point solution $v^*$ that can be found in polynomial time. Since   
all constraint coefficients in this LP are integers bounded  by $L$, the entries in $v^*$ must be rationals bounded  by $(2nL)^{2n}$. Finally,    $u\in P$ corresponds to a feasible solution to this LP with $v=u$, $d_i = |v_i-u_i|$ (for $i\in [n]$) and objective $\|u-u'\|_1\le \frac{n}{L}$. It now follows that $\|v^*-u'\|_1 \le \frac{n}{L}$ and so $\|v^*-u\|_1 \le \|v^*-u'\|_1 + \|u'-u\|_1 \le \frac{2n}{L}$. 
\end{proof}

\begin{fact}\label{fact:gauss-elim}
Consider any linear subspace given by $\{ x : Ax=0\}$ where all entries in $A$ are integers of size at most  $\log_2 B$. Then there is a polynomial (in $\log  B$ and size of $A$) time algorithm that computes a basis $\{b_j\}_{j=1}^k$ of this subspace where (i) all entries are rationals of size $O(n^2  \log B)$, (ii) $|\langle b_j,b_j\rangle -1 | \le \frac{1}{n^4B}$ for all $j\in [k]$, and (iii) $|\langle b_j,b_\ell\rangle  | \le \frac{1}{n^4B}$ for all $j\ne \ell$, $j, \ell \in [k]$.
\end{fact}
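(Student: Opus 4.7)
The plan is to compute an exact orthogonal basis of $\{x : Ax = 0\}$ by Gram-Schmidt and then rescale each vector by a rational approximation to its inverse norm. Because pure rescaling preserves orthogonality, property (iii) will in fact hold exactly, and the approximation error will appear only in property (ii). First I would apply Gaussian elimination to $A$ to obtain a basis $v_1,\ldots,v_k\in\mathbb{Q}^n$ of the nullspace; by the standard subdeterminant argument and Hadamard's inequality (see e.g.\ Schrijver, Thm.~3.3), this takes polynomial time and each $v_j$ has entries of bit-size $O(n\log B)$. Then I would carry out the exact Gram-Schmidt recursion
\begin{equation*}
u_j \;=\; v_j \;-\; \sum_{i<j}\frac{\langle v_j,u_i\rangle}{\langle u_i,u_i\rangle}\,u_i
\end{equation*}
over $\mathbb{Q}$, so that $\langle u_j,u_\ell\rangle = 0$ holds exactly for $j\neq\ell$. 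Writing the entries of $u_j$ as ratios of subdeterminants of the Gram matrix of $\{v_i\}$ and applying Hadamard gives bit-size $O(n^2\log B)$ for the entries of $u_j$ as well as for the rational $\|u_j\|^2$, with $\|u_j\|\le B^{O(n)}$.

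Next, for each $j\in[k]$ I would compute, by binary search, a positive rational $c_j$ of bit-size $O(n\log B)$ satisfying $|c_j - 1/\|u_j\|| \le \epsilon$, where $\epsilon := B^{-Cn}$ for a sufficiently large constant $C$; each query of the form $c^2 \lessgtr 1/\|u_j\|^2$ is just a comparison of two known rationals. Setting $b_j := c_j\,u_j$, property (i) is immediate, since the entries are products of rationals of bit-sizes $O(n\log B)$ and $O(n^2\log B)$. For (ii), writing $c_j = 1/\|u_j\| + \delta_j$ with $|\delta_j|\le\epsilon$ yields
\begin{equation*}
\bigl|\langle b_j,b_j\rangle - 1\bigr| \;=\; \bigl|2\delta_j\|u_j\| + \delta_j^2\|u_j\|^2\bigr| \;\le\; 3\epsilon\cdot B^{O(n)},
\end{equation*}
which falls below $1/(n^4B)$ once $C$ is chosen large enough. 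Property (iii) is trivial: $\langle b_j,b_\ell\rangle = c_j c_\ell \langle u_j,u_\ell\rangle = 0$ exactly.

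The one real technical worry is controlling the bit-complexity of the intermediate rationals in Gaussian elimination and exact Gram-Schmidt; this is handled by the classical subdeterminant/Hadamard analysis invoked above, and a fully transparent backup is to use the ``fraction-free'' variants of both algorithms, whose intermediate integers are themselves subdeterminants of $A$ (or of its Gram matrix) and hence have bit-size $O(n^2\log B)$ by construction. The rest of the plan---the one-dimensional binary search for each $c_j$ and the propagation of the $\epsilon$-error into (ii)---is routine.
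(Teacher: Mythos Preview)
Your proof is correct and takes a genuinely different route from the paper. The paper first forms an exactly orthonormal basis $\{b'_j\}$ (with possibly irrational entries) via Gram--Schmidt and then invokes Fact~\ref{fact:LP-point} to replace each $b'_j$ by a nearby rational point $b_j$ in the subspace $\{x:Ax=0\}$; the $\ell_1$-closeness guaranteed there yields (ii) and (iii) only approximately, while Fact~\ref{fact:LP-point}(ii) gives the bit-size bound (i). You instead keep the Gram--Schmidt vectors $u_j$ exact over $\mathbb{Q}$ (hence exactly orthogonal) and approximate only the scalar $1/\|u_j\|$. This is more self-contained---no auxiliary LP is needed---and it delivers (iii) with equality rather than up to $1/(n^4B)$, which is a strictly stronger conclusion. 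One small correction: the claim that $c_j$ has bit-size $O(n\log B)$ is too optimistic in general, since $\|u_j\|^2=\det G_j/\det G_{j-1}$ can be as small as $1/\det G_{j-1}=B^{-O(n^2)}$ and hence $c_j\approx 1/\|u_j\|$ may be as large as $B^{O(n^2)}$; but allowing $c_j$ bit-size $O(n^2\log B)$ still gives $b_j=c_j u_j$ with entries of size $O(n^2\log B)$, so property (i) is unaffected and the argument goes through.
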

\begin{proof}
We can obtain an orthonormal basis $\{b'_j\}_{j=1}^k$ of this subspace using Gaussian elimination and Gram-Schmidt orthogonalization.  This clearly satisfies the last two conditions. But some more work is needed since we require the entries in the basis vectors to be bounded integers. 

To ensure this, we modify each vector $b'_j$  into $b_j$ separately 
by applying Fact~\ref{fact:LP-point} with polyhedron $P=\{x:Ax=0\}$, $u=b'_j$, and then set $b_j=v^*$. Now the last condition follows from Fact~\ref{fact:LP-point}(ii). The first and second conditions follow from Fact~\ref{fact:LP-point}(i) since $\{b'_j\}_{j=1}^k$ is orthonormal. 
\end{proof}

\end{document}